\newtheorem{theo}{Theorem}
\newtheorem{pro}{Proposition}[section]
\newtheorem{lem}[pro]{Lemma}
\newtheorem{coro}[pro]{Corollary}
\newtheorem{remark}[pro]{Remark}
\newtheorem{defi}[pro]{Definition}
\theoremstyle{plain}
\newtheorem*{theostar}{Theorem}
\def\({\left(}
\def\){\right)}
\def\1{\mathbf{1}}
\def\admissible{{\mathcal{A}_m}}
\def\div{\mathrm{div} \ }
\def\dt0{{{\frac{d}{dt}}_{|t=0}}}
\def\E{{\Sigma}}
\def\ep{\varepsilon}
\def\hal{\frac{1}{2}}
\def\I{{\mathcal{F}}}
\def\indic{\mathbf{1}}
\def\indic{\mathbf{1}}
\def\l|{\left|}
\def\mr{\mathbb{R}}
\def\mo{{\mu_0}}
\def\mz{\mathbb{Z}}
\def\nab{\nabla}
\def\p{\partial}
\def\ro{\rho}
\def\r|{\right|}
\def\supp{\text{Supp}}
\def\T{{\mathbb{T}}}
\def\vp{\varphi}
\def\probas{\mathcal{P}}
\def\dr{{\delta_\mr}}
\def\bw{{\underline{m}}}
\def\j{{E}}
\def\tW{\widetilde{W}}
\def\bw{\overline{W}}
\def\Esp{\mathbf{E}}
\def\lc{\left\langle}
\def\rc{\right\rangle}
\DeclarePairedDelimiter\floor{\lfloor}{\rfloor}
\renewcommand{\epsilon}{\varepsilon}
\def \N{\mathcal{N}}
\begin{document}

\title{A Uniqueness Result for Minimizers of the 1D Log-gas Renormalized Energy}
\author{Thomas Lebl\'e \footnote{Sorbonne Universités, UPMC Univ. Paris 06 and CNRS, UMR 7598, Laboratoire Jacques-Louis Lions, F-75005, Paris. E-mail: leble@ann.jussieu.fr}}
\maketitle
\begin{abstract}
In \cite{SSlg} Sandier and Serfaty studied the one-dimensional Log-gas model, in particular they gave a crystallization result by showing that the one-dimensional lattice $\mz$ is a minimizer for the so-called renormalized energy which they obtained as a limit of the $N$-particle Log-gas Hamiltonian for $N \to \infty$. However, this minimizer is not unique among infinite point configurations (for example small perturbations of $\mz$ leave the renormalized energy unchanged). In this paper, we establish that uniqueness holds at the level of (stationary) point processes, the only minimizer being given by averaging $\mz$ over a choice of the origin in $[0,1]$.  This is proved by showing a quantitative estimate on the two-point correlation function of a process in terms of its renormalized energy.   
 \end{abstract}

\section{Introduction and statement of the results}
\subsection{Introduction}
The $N$-particle Log-gas Hamiltonian $w_N$ is defined on $\mr^N$ by: 
\begin{equation}
\label{wn}
w_N (x_1, \dots, x_N)= -  \sum_{i \neq j} \log |x_i-x_j| +N  \sum_{i=1}^N V(x_i).
\end{equation}
where $V$ is a confining potential satisfying some growth conditions to be given later. 

While $w_N$ has an obvious physical interpretation as the  energy of $N$ particles $x_1, \dots, x_N$ living on the real line, interacting pairwise through a potential $g(x,y) = -\log|x-y|$ and subject to an external field $V$, the Hamiltonian (\ref{wn}) also appears in random matrix theory as an exponential weight in the law of the eigenvalues of random Hermitian matrices (for a survey see \cite{forrester}). Minimizers of $w_N$ are also called ($N$-points) “weighted Fekete sets” and arise in interpolation, cf. \cite{SaffTotik}. The “renormalized energy” $W$ of Sandier-Serfaty (introduced in \cite{SSvortex}, see also \cite{SS2d} for the two-dimensional case, \cite{RS} for an alternative approach that allows to handle the higher dimensional case as well, and  \cite{SZurich} for a pedagogical survey) is a way to make sense of the Hamiltonian $w_N$ in the $N \to \infty$ limit, by deriving an energy functional which allows to consider the energy of infinite point configurations, and which is the second-order $\Gamma$-limit of $w_N$. 

It is proven in \cite{SSlg} that $W$ is minimal at $\mz$ among infinite point configurations of density one, however this minimizer is not unique: in particular it was observed that local defects in the lattice, by the mean of arbitrary perturbations of $\mz$ on a compact set, form non-lattice configurations with the same minimal energy. In this paper we prove that the local defects essentially account for all the ground state degeneracy, by showing that there is no \textit{translation-invariant} probability measure on point configurations having minimal energy in expectation, but the one associated to $\mz$ by averaging $\mz$ over translations in $[0,1]$. This uniqueness result is obtained as consequence of our main theorem, which gives a lower bound on the (mean) renormalized energy of a stationary point process in terms of the difference between its two-point correlation function and that of the stationary process associated to the one-dimensional lattice $\mz$.


\subsection{Definition and properties of the renormalized energy}
In this section, and in all the paper, we follow mainly the definitions and notation from \cite{SSlg}. 

Let us start by recalling the definition of the renormalized energy.
The renormalized energy of an infinite configuration of points can be understood as a way of computing the electrostatic energy of those points, seen as interacting charged particles of charge $+1$, together with an infinite negatively charged uniform background. In 1D, the renormalized energy is obtained by “embedding” the real line into the plane and computing the renormalized energy in the plane according to its two-dimensional definition of  \cite{SS2d}. In particular, the pairwise interaction  $g(x,y) = -\log |x-y|$ is not the Coulomb electrostatic interaction of one-dimensional physics, but a restriction on $\mr \subset \mr^2$ of the Coulomb two-dimensional interaction, hence the term “Log-gases”.

In what follows, $\mr$ will denote the set of real numbers but also the real line of the plane $\mr^2$ i.e. points of the form $(x,0)\in \mr^2$.   For the sake of clarity, we will denote points in $\mr$ by the letter $x$ and points in the plane by $z=(x,y)$. We denote by
$\dr$ the measure of length on  $\mr$ seen as embedded in $\mr^2$, that is
$$\int_{\mr^2} \varphi  \dr  = \int_\mr \varphi(x, 0)\, dx$$
for any smooth compactly supported test function $\vp $ in $\mr^2$.

The “admissible classes” $\mathcal{A}_m$ correspond to the electric fields generated by infinite configurations on the real line together with a background of uniform density $m$:
\begin{defi} Let $m \geq 0$.
Let  $\j$ be a gradient vector field in $\mr^2$. We say $\j$ belongs to the admissible class $\mathcal{A}_m $
 if 
 \begin{equation} \label{eqj}
 \div \j= 2\pi (\nu -m\delta_\mr ) \quad  \text{in} \ \mr^2
 \end{equation}
 where $\nu$ has the form
$\nu=  \sum_{x \in \Lambda} \delta_{x}$ for
some discrete set $\Lambda \subset\mr \subset \mr^2$ (where $\delta_x$ denotes the Dirac mass at $x$), and $\frac{\nu ([-R,R] ) } {R}$ is bounded by a constant independent of $R>1$.
\end{defi}

We now turn to the central definition:
\begin{defi}[Renormalized energy]\label{def1}Let $m$ be a nonnegative number. For any bounded function $\chi$ and any
 $\j$ satisfying \eqref{eqj}
we let
\begin{equation}W(\j, \chi) = \lim_{\eta\to 0} \(
\hal\int_{\mr^2 \backslash \cup_{p\in\Lambda} B(p,\eta) }\chi
|\j|^2 +  \pi \log \eta \sum_{p\in\Lambda} \chi (p) \)
\end{equation}
and the renormalized energy $W$ is defined by
\begin{equation} \label{Wroi} W(\j)= \limsup_{R \to \infty}
\frac{W(\j, \chi_{R})}{R} ,
\end{equation} where $\{\chi_R\}_{R>0}$ is a family of cut-off functions satisfying 
$$|\nab \chi_{R}|\le C, \quad \supp(\chi_{R})
\subset [-R/2,R/2] \times \mr, \quad \chi_{R}(z)=1 \ \text{if } |x|<R/2-1 ,$$
for some $C$ independent of $R$. 
\end{defi}

The various admissible classes $\mathcal{A}_m$ ($m \in \mr^+$) are  related to each other by the following scaling relation: if $\j$ belongs to $\mathcal{A}_m$ then
$\j':=\frac{1}{m} \j(\cdot / m)$ belongs to $\mathcal{A}_1 $ and
\begin{equation}\label{scaling}
W(\j)= m \(W(\j') - \pi \log m\).
\end{equation}
Moreover, it is easy to see that the point configurations associated to $E$ and $E'$ coincide up to an homothety.

For reasons related to the physical interpretation of the Hamiltonian $w_N$, the gradient vector field $E$ is sometimes called the “electric field” associated to a configuration (seen as charged point particles). Starting from a discrete set of points $\Lambda \subset \mr \subset \mr^2$, there might be several gradient vector fields $E$ satisfying (\ref{eqj}) with $\nu = \sum_{x \in \Lambda} \delta_x$: if $E$ is any such field (let us note that, due to the infinite setting, there might not be any) we can simply add to $E$ the gradient of any harmonic function on $\mr^2$. In the two-dimensional case this is indeed an issue, but for Log-gases the following lemma shows that there is in fact a natural choice of the electric vector field $E$:

\begin{lem}{\cite[Lemma 1.7.]{SSlg}} \label{depoint} Let $\j\in \mathcal{A}_m$ be such that $W(\j)<+\infty$. Then any other $\j'$ satisfying $\div E' = \div E$ and $W(\j')<+\infty$, is such that $\j'=\j$. In other words, $W$ only depends on the points.
\end{lem}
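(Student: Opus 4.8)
The plan is to show that if $\j$ and $\j'$ both lie in $\mathcal{A}_m$, have finite renormalized energy, and satisfy $\div \j' = \div \j$, then their difference is zero. Set $\j'' := \j' - \j$. Since both fields are gradient vector fields (by the definition of the admissible class) and $\div \j' = \div \j$, the field $\j''$ is a curl-free, divergence-free vector field on $\mr^2$, hence the gradient of a harmonic function $H$ on $\mr^2$; equivalently $\j'' = \nab H$ with $\Delta H = 0$. So the whole point is to show that this harmonic gradient must vanish, using the two finite-energy hypotheses. The key structural input is that the singularities of $\j$ and $\j'$ are exactly the same (same point masses $\nu$ at the same locations, same background $m\dr$), so $\j''$ is genuinely globally harmonic with no distributional singularities.

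First I would make precise that finiteness of $W(\j)$ forces $\j \in L^2_{\loc}$ away from the points, with the $|\log\eta|$ renormalization exactly cancelling the logarithmic divergence of $\hal\int |\j|^2$ near each point, and with an $O(R)$ growth of $\int_{B_R} |\j|^2$ coming from \eqref{Wroi}; the same holds for $\j'$. Writing
\begin{equation}
\hal\int_{B_R\setminus\cup B(p,\eta)} \chi_R |\j'|^2 = \hal\int \chi_R |\j|^2 + \int \chi_R \j\cdot\j'' + \hal\int \chi_R |\j''|^2,
\end{equation}
I want to argue that the cross term and the $\j''$-term cannot grow faster than $o(R)$, which combined with finite energy of both sides pins down $\j''$. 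The decisive observation is that $\j''=\nab H$ is harmonic \emph{with no point singularities}, so near each $p\in\Lambda$ it is smooth and the renormalization term $\pi\log\eta\sum\chi(p)$ is identical for $\j$ and $\j'$; the divergence structure is unaffected by adding $\j''$. Thus the difference of the two renormalized energies reduces, after sending $\eta\to 0$, to a clean expression involving only $\int \chi_R \j\cdot\j''$ and $\int\chi_R|\j''|^2$.

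The crux is then a Liouville-type / growth argument: a globally harmonic gradient $\nab H$ on $\mr^2$ whose $L^2$ mass on $B_R$ grows sub-linearly in $R$ must be constant, and being a gradient field in $\mathcal{A}_m$ forces that constant to be $0$ (otherwise the divergence condition or the sublinear bound on $\nu([-R,R])/R$ is violated). Concretely, harmonicity of each component of $\nab H$ together with the mean-value property gives pointwise control $|\nab H(z)|^2 \lesssim R^{-2}\int_{B(z,R)}|\nab H|^2$; if the right-hand side is $o(1)$ as $R\to\infty$ uniformly, then $\nab H\equiv 0$. Hence I must upgrade the finite-energy hypotheses into a genuine sublinear (in fact bounded-average) $L^2$ bound for $\j''$ on large balls. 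I expect the main obstacle to be precisely this step: controlling the cross term $\int\chi_R \j\cdot\nab H$ and extracting from ``$W(\j)<\infty$ and $W(\j')<\infty$'' a usable quantitative $L^2$ growth bound on $\j''$, since $\j$ itself is only controlled in an averaged, renormalized sense and is singular at the points. Handling the boundary/cut-off contributions from $\nab\chi_R$ (which are $O(1)$ on a strip of width $O(1)$ near $|x|=R/2$) and showing they contribute only $o(R)$ after division by $R$ will require integrating by parts against the harmonic function $H$ and exploiting that $\j''$ carries no mass on $\dr$, so that the background term $m\dr$ cancels exactly in the difference.
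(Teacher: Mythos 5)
Your reduction is the right one, and in outline it is how \cite[Lemma 1.7]{SSlg} is actually proved (note that the present paper does not reprove this lemma; it only cites it): $E''=E'-E$ is curl-free and divergence-free, hence by Weyl's lemma $E''=\nabla H$ with $H$ harmonic, the point singularities and the background cancel exactly, the $\pi\log\eta$ renormalization terms are common to $E$ and $E'$, and a mean-value/Liouville argument finishes once one has a growth bound (in fact subquadratic suffices: $|\nabla H(z)|^2\le \frac{1}{\pi \rho^2}\int_{B(z,\rho)}|\nabla H|^2$, so $\int\chi_R|\nabla H|^2=o(R^2)$ already forces $\nabla H\equiv 0$, with no separate ``the constant is zero'' step). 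But the step you yourself flag as the main obstacle --- converting $W(E)<\infty$, $W(E')<\infty$ into such a bound --- is the entire content of the lemma, and the tool you propose does not close it. Integrating the cross term by parts gives $\int\chi_R E\cdot\nabla H=-2\pi\int \chi_R H\,d(\nu-m\delta_\mr)-\int H\,E\cdot\nabla\chi_R$: this involves the values of the a priori arbitrarily growing harmonic function $H$ at the points of $\Lambda$ and on $\mr$, and nothing cancels, because it is $\div E$ (not $\div E''$) that appears; the fact that ``$E''$ carries no mass on $\delta_\mr$'' only yields the identity $\int\chi_R|\nabla H|^2=-\int H\,\nabla H\cdot\nabla\chi_R$, which is useless without prior control on $H$. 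A direct Cauchy--Schwarz is equally circular: the near-point part of the cross term is only estimated via $\sup|\nabla H|$ near $\Lambda$, which is what you are trying to bound. Also, your parenthetical reason why a nonzero constant field is excluded is wrong: a constant alters neither $\div E$ nor $\nu([-R,R])/R$; what excludes it is that the strip $[-R/2,R/2]\times\mr$ has infinite area, so $E+c$ cannot be square-integrable there when $E$ is.

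The standard way to fill the gap avoids the cross term altogether, via the parallelogram identity. On $(\supp\chi_R)\setminus\cup_p B(p,\eta)$ write $2|E|^2+2|E'|^2=|E-E'|^2+4\bigl|\tfrac{E+E'}{2}\bigr|^2$. Since $\tfrac{E+E'}{2}$ is a gradient with the same divergence $2\pi(\nu-m\delta_\mr)$, it is admissible with the same singularities, so its renormalized integral defines $W\bigl(\tfrac{E+E'}{2},\chi_R\bigr)$ and the $\pi\log\eta$ terms cancel identically, yielding
\begin{equation*}
\int\chi_R|\nabla H|^2 \;=\; 4W(E,\chi_R)+4W(E',\chi_R)-8\,W\bigl(\tfrac{E+E'}{2},\chi_R\bigr).
\end{equation*}
The first two terms are $O(R)$ by hypothesis (finiteness of the $\limsup$ in \eqref{Wroi}), and the last is bounded below by $-CR\log R$ using exactly the estimates of \cite[Proposition 2.1]{SSlg} that this paper quotes in the proof of Lemma \ref{discrepa}: the density $g$ is bounded below with mass $O(R)$ on the strip, up to a boundary error $Cn(\log n+1)$ with $n=O(R)$ by the density condition in $\mathcal{A}_m$. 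Hence $\int\chi_R|\nabla H|^2=O(R\log R)=o(R^2)$, and the mean-value estimate gives $\nabla H\equiv 0$, i.e. $E'=E$. So your skeleton is correct, but the decisive estimate is missing, and the hinted route would not produce it.
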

By simple considerations similar to \cite[Section 1.2]{SS2d}  this makes $W$ a measurable function of the point configuration $\Lambda$ and with an abuse of notation we will write $W(\Lambda)$ as well as  $W(E)$, where $E$ is the only admissible vector field of finite energy associated to $\Lambda$ (when it exists). We will frequently use the following map to get from an electric field $E \in \mathcal{A}_1$ to its underlying point configuration:
\begin{equation} \label{pushforward}
E \mapsto \frac{1}{2\pi} \div E + \delta_{\mr}.
\end{equation}

It is not difficult to show (for a proof see \cite{SS2d}) that an admissible gradient vector field is in $L^q_{loc}(\mr^2, \mr^2)$ for $q < 2$. We endow the admissible classes $\mathcal{A}_m$ with the Borel $\sigma$-algebra inherited from $L^q_{loc}(\mr^2, \mr^2)$ for some $q < 2$.
\begin{defi}
Let $m>0$. If $P$ is a translation-invariant probability measure  on $\mathcal{A}_m$, we define 
\begin{equation}
\label{Wbar}
\bw(P) = \int W(E) dP(E).
\end{equation}We say that such a probability measure is translation-invariant (or stationary) when $P$ is invariant by (the push-forward of) the maps $E \mapsto E( \cdot-\lambda)$ for any $\lambda \in \mr$.
\end{defi}

Finally, when $X$ is a measurable space, $\mathcal{P}(X)$ will denote the space of probability measures on $X$. If $P\in \mathcal{P}(X)$ is a probability measure and $f: X \mapsto \mr$ a measurable function, we denote by $\Esp_P \left[ f \right]$ the expectation of $f$ under $P$.

\subsection{Periodic case and minimization}
When the configuration is assumed to have some periodicity, there is an explicit formula for $W$ in terms of the points. The following lemma is proven in \cite[Section 2.5.]{BS} (here we can reduce to the class $\mathcal{A}_1$ by scaling, as seen above in (\ref{scaling})).
\begin{lem}\label{casper}
In the case $m=1$ and when the set of points $\Lambda$ is periodic with respect to some lattice $N\mz$, then it can be viewed as a set of $N$ points $a_1, \dots , a_N$ over the torus
$\T_N := \mr/(N\mz)$.  In this case, by Lemma \ref{depoint} there exists a unique  $\j$ satisfying \eqref{eqj} and for which $W(\j)<+\infty$. It is periodic and equal to $\j_{\{a_i\}}= \nab H$, where $H$ is the  solution  on $\T_N$ to   $-\Delta H = 2\pi (\sum_i\delta_{a_i} - \dr)$, and we have  the explicit formula:
\begin{equation} \label{Wperiodique}
W(\j_{\{a_i\}})=
 -\frac{\pi}{N} \sum_{i \neq j} \log \left|2\sin \frac{\pi(a_i - a_j)}{N} \right|- \pi \log \frac{2\pi}{N}.
\end{equation}
\end{lem}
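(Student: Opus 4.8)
The plan is to exhibit the unique finite-energy field explicitly, reduce the renormalized energy to a single period by periodicity, and evaluate that period by integration by parts.

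First I would write down a candidate potential on the cylinder $\T_N \times \mr$. Setting $u(z) = -\log\left|2\sin\frac{\pi z}{N}\right|$ for $z = (x,y)$, I define
$$H(z) = \sum_{i=1}^N u(z - a_i) + \pi|y|.$$
Because $2\sin\frac{\pi(z+N)}{N} = -2\sin\frac{\pi z}{N}$, the function $H$ is $N$-periodic in $x$. Since $u$ is, up to the additive constant $-\log\frac{2\pi}{N}$, the Green function of $-\Delta$ for a single unit charge on the cylinder, while $-\Delta(\pi|y|) = -2\pi\dr$, one checks directly that $-\Delta H = 2\pi\left(\sum_i\delta_{a_i} - \dr\right)$, so that $E := \nab H$ satisfies \eqref{eqj} with $m=1$. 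For large $|y|$ one has $\left|2\sin\frac{\pi(z-a_i)}{N}\right| = e^{\pi|y|/N}\bigl(1 + O(e^{-2\pi|y|/N})\bigr)$, so the linear growths of the $N$ summands cancel against $\pi|y|$ and $\nab H$ decays exponentially in $|y|$; hence $W(E) < +\infty$, and by Lemma \ref{depoint} this $E$ is \emph{the} admissible finite-energy field over $\{a_i\}$.

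Next I would use periodicity to replace the $R\to\infty$ average in \eqref{Wroi} by a computation on one period $K := \T_N \times \mr$. The cut-off $\chi_R$ covers $R/N + O(1)$ copies of $K$; the region where $\nab\chi_R \neq 0$ meets only $O(1)$ periods of finite energy, so both $\hal\int\chi_R|E|^2$ and $\pi\log\eta\sum_p\chi_R(p)$ grow linearly in $R$ with the same per-period coefficient. Dividing by $R$ and letting $R\to\infty$ gives
$$W(E) = \frac{1}{N}\lim_{\eta\to0}\left(\hal\int_{K\setminus\cup_i B(a_i,\eta)}|\nab H|^2 + \pi N\log\eta\right).$$
I would then evaluate this per-period energy by integrating by parts on $K\setminus\cup_i B(a_i,\eta)$. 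Using that $\Delta H = 2\pi\dr$ away from the points, that $\nab H$ vanishes at $y=\pm\infty$, and that the lateral contributions at $x\in\{0,N\}$ cancel by periodicity, the only surviving boundary terms come from the small circles. The expansion $2\sin\frac{\pi(z-a_i)}{N}\sim\frac{2\pi}{N}(z-a_i)$ yields $H(z) = -\log|z-a_i| + \gamma_i + o(1)$ near $a_i$, with
$$\gamma_i = -\log\frac{2\pi}{N} - \sum_{j\neq i}\log\left|2\sin\frac{\pi(a_i-a_j)}{N}\right|,$$
so each circle contributes $\pi(-\log\eta + \gamma_i) + o(1)$. The remaining bulk term $-\pi\int_0^N H(x,0)\,dx$ vanishes because $\int_0^N\log\left|2\sin\frac{\pi x}{N}\right|\,dx = 0$. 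Collecting terms gives $\lim_{\eta\to 0}(\cdots) = \pi\sum_i\gamma_i$, and dividing by $N$ produces exactly \eqref{Wperiodique}.

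The main obstacle is the analytic justification of the two limiting steps rather than the algebra. One must control the interchange of $\lim_R$ and $\lim_\eta$, show rigorously that the boundary terms at $y=\pm\infty$ and along $x\in\{0,N\}$ vanish (which rests on the exponential decay of $\nab H$ and on periodicity), and verify that the $o(1)$ remainders on the circles $\partial B(a_i,\eta)$ are uniform in $\eta$. By contrast, the two key identities — the local asymptotics of $2\sin(\pi z/N)$ and the vanishing of $\int_0^N\log|2\sin(\pi x/N)|\,dx$ — are elementary, so it is precisely the care needed in passing to the limits on the noncompact cylinder where the work concentrates.
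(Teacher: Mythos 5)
Your proof is correct, and it is worth noting that the paper itself gives no argument for this lemma: it is quoted from \cite[Section 2.5]{BS}, so there is no internal proof to compare against. Your route --- the explicit potential $H(z)=\sum_i u(z-a_i)+\pi|y|$ with $u(z)=-\log\left|2\sin\frac{\pi z}{N}\right|$, reduction of the $R\to\infty$ average to one period of the cylinder, and Green's formula with the circles $\partial B(a_i,\eta)$ contributing $\pi(-\log\eta+\gamma_i)$ while the bulk term dies by $\int_0^1\log(2\sin\pi t)\,dt=0$ --- is essentially the computation carried out in the cited reference, so you have in effect reconstructed the missing proof rather than found a different one. The key steps all check out: $-\Delta u=2\pi\delta_0$ on the cylinder, $-\Delta(\pi|y|)=-2\pi\delta_\mr$, the exponential decay $H=O(e^{-2\pi|y|/N})$ justifying the discarded boundary terms at $y=\pm\infty$, the constants $\gamma_i=-\log\frac{2\pi}{N}-\sum_{j\neq i}\log\left|2\sin\frac{\pi(a_i-a_j)}{N}\right|$, and the final bookkeeping reproducing \eqref{Wperiodique}. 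Two small remarks. First, strictly speaking $E=\nabla H$ has $\div E=\Delta H=-2\pi\left(\sum_i\delta_{a_i}-\delta_\mr\right)$, which is the sign opposite to \eqref{eqj}; this discrepancy is inherited from the lemma statement itself (which pairs $\j_{\{a_i\}}=\nab H$ with $-\Delta H=2\pi(\nu-\delta_\mr)$), and it is harmless since $W$ is quadratic in the field and $-E=\nabla(-H)$ is admissible in the stated sense. Second, the analytic caveats you flag at the end are genuinely routine: there is no real interchange-of-limits issue, because for each fixed $R$ the $\eta\to0$ limit defining $W(\j,\chi_R)$ exists and, by $N$-periodicity in $x$ and integrability in $y$ of the renormalized energy density, equals $(R/N+O(1))$ times the per-period quantity plus $O(1)$, after which one divides by $R$; and the uniformity of the $o(1)$ on the small circles is immediate since $H(z)+\log|z-a_i|$ is harmonic, hence smooth, in a neighborhood of $a_i$.
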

Henceforth we will denote by $W(\mz)$ the energy of the periodic electric field associated to $\mz$ as above. In this periodic, one-dimensional setting, $\mz$ is shown to be the (unique) minimizer of $W$ by a simple convexity argument. The key point of our proof is to make this argument quantitative in order to get a lower bound on $W(\j_{\{a_i\}})$ in terms of the local defects with respect to the lattice configuration (this is Lemma \ref{QLB}). 

A general argument of approximating any gradient vector field $E$ of finite energy by periodic electric fields implies a minimization result for $W$ on $\mathcal{A}_m$, without any periodicity assumption. It is proven in \cite[Theorem 2]{SSlg} that:
\begin{theostar}[crystallization in 1D] \label{thmini}
 $\min_{\admissible} W=  - \pi m \log (2\pi  m) $
and this minimum is achieved by the perfect lattice i.e. $\Lambda= \frac{1}{m} \mz$.
\end{theostar}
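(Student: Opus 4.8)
The plan is to reduce to the class $\ainfty$ by the scaling relation, to solve the minimization explicitly in the periodic case via Lemma \ref{casper}, and then to remove the periodicity assumption by a periodic approximation argument. First I would invoke the scaling relation (\ref{scaling}): if $\j\in\admissible$ then $\j'=\frac1m\j(\cdot/m)$ lies in $\ainfty$, the two underlying configurations differ by the homothety $x\mapsto x/m$, and $W(\j)=m(W(\j')-\pi\log m)$. Taking infima gives $\min_{\admissible}W=m(\min_{\ainfty}W-\pi\log m)$, and the image under $x\mapsto x/m$ of a minimizer on $\ainfty$ is a minimizer on $\admissible$. It therefore suffices to show that $\min_{\ainfty}W=-\pi\log(2\pi)$, attained at $\mz$; substituting this into the identity above produces the announced value $-\pi m\log(2\pi m)$ and exhibits $\frac1m\mz$ as a minimizer on $\admissible$.

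Second I would solve the periodic problem, i.e. $m=1$ with the points $N\mz$-periodic, using the closed formula (\ref{Wperiodique}) of Lemma \ref{casper} for the $N$ points $a_1,\dots,a_N$ on $\T_N$. Letting $z_j$ be the point of the unit circle with argument $2\pi a_j/N$, the chord-length identity $|z_i-z_j|=2|\sin(\pi(a_i-a_j)/N)|$ identifies the double sum in (\ref{Wperiodique}) with the logarithmic energy $-\sum_{i\neq j}\log|z_i-z_j|$ of $N$ points on the circle. Since $-\log|2\sin(\pi\theta)|=\sum_{k\ge1}\frac1k\cos(2\pi k\theta)$ has nonnegative Fourier coefficients, the interaction kernel is positive semi-definite up to an additive constant; the resulting convexity of the energy as a function of the empirical measure forces the minimizer to be the equidistributed configuration $a_j=j$ (the $N$-th roots of unity). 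Using the classical product $\prod_{\ell=1}^{N-1}2\sin(\pi\ell/N)=N$ one computes $-\sum_{i\neq j}\log|2\sin(\pi(i-j)/N)|=-N\log N$, whence (\ref{Wperiodique}) gives, for every $N$,
$$W(\mz)=\frac{\pi}{N}(-N\log N)-\pi\log\frac{2\pi}{N}=-\pi\log(2\pi),$$
and $W\ge-\pi\log(2\pi)$ on every finite periodic configuration, with equality at the lattice.

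Third I would pass to a general $\j\in\ainfty$ with $W(\j)<+\infty$. The upper bound $\min_{\ainfty}W\le-\pi\log(2\pi)$ is just the preceding computation for $\mz$, so only the lower bound $W(\j)\ge-\pi\log(2\pi)$ remains. By Definition \ref{def1}, $\frac1R W(\j,\chi_R)\to W(\j)$ along a suitable sequence $R\to\infty$. The idea is a screening construction: one alters $\j$ inside a boundary layer of the slab $[-R/2,R/2]\times\mr$ so that the new field $\tilde\j$ can be extended $R$-periodically while still satisfying (\ref{eqj}), at the cost of an energy error that is $o(R)$. The periodized field is then the canonical one attached to its points (Lemma \ref{depoint}), so the second step applies and gives $W(\tilde\j)\ge-\pi\log(2\pi)$; combined with $W(\tilde\j)\le\frac1R W(\j,\chi_R)+o(1)$ and $R\to\infty$, this yields $W(\j)\ge-\pi\log(2\pi)$.

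I expect this last screening/periodization to be the main obstacle: one must truncate and repair the electric field near $\partial([-R/2,R/2]\times\mr)$ so that the divergence constraint (\ref{eqj}) is preserved and the field closes up periodically, all while keeping the extra self-energy negligible with respect to $R$. By contrast the first two steps are, respectively, an explicit scaling computation and a standard positive-definiteness (convexity) argument on the circle.
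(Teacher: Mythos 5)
Your overall architecture --- reduction to $\mathcal{A}_1$ by the scaling relation (\ref{scaling}), explicit solution of the periodic problem via Lemma \ref{casper}, then removal of the periodicity assumption by screening and periodization --- is exactly the route of the proof that the paper cites from \cite[Theorem 2]{SSlg}. Your Step 1 is a correct computation; your Step 3 correctly identifies the screening construction of \cite[Proposition 3.1]{SSlg} (the same one this paper re-uses in Step 2 of the proof of Theorem \ref{NonPer}: truncation near $\partial([-R/2,R/2]\times\mr)$, periodic pasting, projection onto gradients, with energy error $o(R)$), and the limsup bookkeeping against Definition \ref{def1} is right; and your evaluation $W(\mz)=-\pi\log(2\pi)$ via $\prod_{\ell=1}^{N-1}2\sin(\pi\ell/N)=N$ is correct.

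The genuine gap is in the middle of Step 2, where you claim that positive semi-definiteness of the kernel and ``convexity of the energy as a function of the empirical measure'' force the minimizer to be the roots of unity. That reasoning applies to the energy $\iint -\log|x-y|\,d\mu(x)\,d\mu(y)$ of probability measures, but it does not transfer to the discrete energy: the double sum in (\ref{Wperiodique}) excludes the diagonal, and the empirical measure of any $N$-point configuration has \emph{infinite} logarithmic energy, so convexity on the space of measures says nothing about it. Concretely, the Fourier identity you would need is $\sum_{i\neq j}K(\theta_i-\theta_j)=\sum_{k\ge1}\frac1k\bigl|\sum_j e^{2\pi i k\theta_j}\bigr|^2-NK(0)$ with $K(0)=\sum_{k\ge 1}\frac1k=+\infty$, an $\infty-\infty$ expression; and for kernels with $K(0)$ finite the principle you invoke is simply false (for $K(\theta)=\cos 2\pi\theta$, any configuration whose first exponential sum vanishes is a minimizer, equidistributed or not). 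Two standard repairs exist. Either apply Hadamard's inequality to the Vandermonde determinant, $\prod_{i<j}|z_i-z_j|=|\det(z_i^{j-1})|\le N^{N/2}$ with equality exactly at rotated $N$-th roots of unity (the classical Fekete-points-on-the-circle theorem), which gives precisely the bound $-\sum_{i\neq j}\log|z_i-z_j|\ge -N\log N$ you need; or use the argument that \cite{SSlg} actually uses and that this paper makes quantitative in Lemma \ref{QLB}: group the pairs by $p$-th neighbour gaps $u_{p,i}=a_{i+p}-a_i$, note $\sum_i u_{p,i}=pN$, and apply Jensen's inequality to the concave function $x\mapsto\log|2\sin x|$ on $(0,\pi)$, gap by gap. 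With either substitute for your Fourier step, your proof is complete and coincides in structure with the cited one.
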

Let us emphasize that as a consequence of the definition of $W$ as a limit (\ref{Wroi}) over large intervals $W$ does not feel compact perturbations of the points (as long as the configuration stays simple i.e. two points of $\Lambda$ are always distinct) hence no uniqueness of the minimizer can be expected at the level of point configurations.

\subsection{Point processes and correlation functions}\label{ppp}
In this paragraph we give some definitions about point processes (for a complete presentation see \cite{PP1}).
\begin{defi}[Point processes]
Let $\mathcal{X}$ be the set of locally finite, simple point configurations on $\mr$. If $B \subset \mr$ is a Borel set, we let $N_{B} : \mathcal{X} \mapsto \mathbb{N}$ be the map giving the number of points of a configuration that lie in $B$. The set $\mathcal{X}$ is endowed with the initial $\sigma$-algebra associated to the maps $\{N_B, B \text{ Borel}\}$. 

A point process is a probability measure on $\mathcal{X}$. The additive group $\mr$ acts on $\mathcal{X}$ by translations $\{\theta_t\}_{t \in \mr}$: if $\mathcal{C} = \{x_i, i \in I\} \in \mathcal{X}$ we let $\theta_t \cdot \mathcal{C} = \{x_i - t, i \in I\}$. It also acts on the set $\probas(\mathcal{X})$ of point processes in the natural way, by pushing-forward $P \in \probas(\mathcal{X})$ by the map $\mathcal{C} \mapsto \theta_t \cdot \mathcal{C}$ for each $t \in \mr$.

A point process is said to be translation-invariant (or stationary) when it is invariant by the action of $\mr$.

It $\Lambda \in \mathcal{X}$ is a periodic configuration of points on $\mr$ with $\theta_L \cdot \Lambda = \Lambda$, we may associate to $\mathcal{C}$ the following stationary point process:
\begin{equation} \label{PLambda}  
P_{\Lambda} := \frac{1}{L} \int_{0}^L \delta_{\theta_{t} \cdot \Lambda} dt.
\end{equation}
\end{defi}
In particular, we will use the stationary processes associated to $\mz$ and its dilations $\frac{1}{m}\mz$ (for $m > 0$), which we denote by $P_{\mz}$, $P_{\frac{1}{m}\mz}$. We may abuse notation, relying on Lemma \ref{depoint}, and use the same notation for the stationary “electric” probability measure (concentrated on $\mathcal{A}_m$ and of finite energy) corresponding to $\frac{1}{m}\mz$.

\begin{defi}[Correlation functions] \label{correlationfunctions}
Let $P_{\Lambda} \in \probas(\mathcal{X})$ be a point process. For $k \geq 1$ the $k$-point correlation function $\rho_{k,P_{\Lambda}}$ is a linear form on the vector space of measurable functions $\varphi_{k} : \mr^k \longrightarrow \mr$, defined by:
$$\rho_{k, P_{\Lambda}}(\varphi_{k}) = \Esp_{P_\Lambda} \left( \sum_{x_1, \dots, x_k \in \mathcal{C} | x_i,x_j \text{ pairwise distinct}} \varphi_{k}(x_1, \dots, x_k)\right).$$
Strictly speaking, it is only defined on the subspace of functions $\varphi_k$ such that the map $\mathcal{C} \mapsto \sum_{x_1, \dots, x_k \in \mathcal{C} | x_i,x_j \text{ pairwise distinct}} \varphi_{k}(x_1, \dots, x_k)$ is integrable against $dP_{\Lambda}$.
\end{defi}
When the $k$-point correlation function exists as a distribution and can be identified with a measurable function, we will write $\int \rho_{k} \varphi_k$ instead of $\rho_{k} (\varphi_k)$. Heuristically, $\rho_1$ (also called the intensity of the point process) gives the density of the process at each point, while $\rho_2(x,y)$ gives the probability of having a point both at $x$ and $y$.

\begin{remark} \label{meandensitym}If $P$ is a translation-invariant (stationary) probability measure concentrated on the class of admissible electric fields $\mathcal{A}_m$, the push-forward $P_{\Lambda}$ of $P$ by $E \mapsto \frac{1}{2\pi}  \div E + m\delta_{\mr}$ is a stationary point process. Let us assume that $\bw(P)$ is finite. Then:
\begin{itemize}
\item The one-point correlation function may be identified as (testing against) the function $\rho_{1,P_{\Lambda}} \equiv m$.
\item  The two-point correlation $\rho_{2, P_{\Lambda}}$ is well-defined as a Radon measure on $\mr^2$.
\end{itemize}
\end{remark}
Physically speaking, this is because there must be approximately $m$ points per unit volume in order to compensate (without overwhelming) the background charge $-m\delta_{\mr}$, so that the configuration is globally neutral (non-neutrality would generate too much energy). We will give a proof of Remark \ref{meandensitym} in Section \ref{secdiscr}.
 
Henceforth, if $\varphi$ is a compactly supported, continous function on $\mr^2$ and $\mathcal{C}$ a point configuration, we will denote by $\lc \varphi, \mathcal{C} \rc$ the quantity
$$\lc \varphi, \mathcal{C} \rc := \sum_{x\neq y | x,y \in \mathcal{C}} \varphi(x,y)$$
which is always a well-defined number since $\varphi$ is compactly supported and $\mathcal{C} \in \mathcal{X}$ is locally finite.

\subsection{Statement of the results}
We start by stating a quantitative version of the convexity argument on which the minimality of $W(\mz)$ as in \cite[Proposition 4.3.]{SSlg} relies: 
\begin{lem} \label{QLB}
Let $a_1 < \cdots < a_N$ be any points in $[0,N]$ and $\j_{\{a_i\}}$ be the corresponding periodic vector field as in Lemma \ref{casper}. Let $u_{p,i}= a_{i+p}-a_{i}$ with the convention $a_{N+l}=a_l+N$ and let $b_{p,i} = u_{p,i} - p$. Then, for some universal constant $C$:
\begin{equation} \label{eqQLB}
W(\j_{\{a_i\}}) - W(\mathbb{Z}) \geq C \sum_{p=1}^{N/2} \frac{1}{N} \sum_{i=1}^N \min\left(\frac{b^2_{p,i}}{p^2}, 1\right).
\end{equation}
\end{lem}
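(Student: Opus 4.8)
The plan is to start from the explicit periodic formula \eqref{Wperiodique} and rewrite $W(\j_{\{a_i\}}) - W(\mz)$ as a sum over the gap variables. Reindexing the double sum $\sum_{i\neq j}$ by writing $j=i+p$ cyclically (so that $a_j-a_i = u_{p,i}$ and the absolute value kills the sign), and using that $W(\mz)$ is the value of the same formula at the regular configuration $a_i=i$ (for which $u_{p,i}=p$), one obtains
\begin{equation}
W(\j_{\{a_i\}}) - W(\mz) = \frac{\pi}{N}\sum_{i=1}^N\sum_{p=1}^{N-1}\bigl(\phi(u_{p,i}) - \phi(p)\bigr), \qquad \phi(u) := -\log\Bigl|\sin\tfrac{\pi u}{N}\Bigr|.
\end{equation}
The function $\phi$ is smooth and strictly convex on $(0,N)$, with $\phi''(u) = \frac{\pi^2}{N^2}\csc^2\frac{\pi u}{N} > 0$, and each $u_{p,i}$ lies in $(0,N)$. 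The crucial structural fact is the mean constraint: summing the telescoping identity over $i$ gives $\sum_{i=1}^N u_{p,i} = pN$, hence $\frac1N\sum_i b_{p,i}=0$ for every $p$. Writing $\psi_p(u):=\phi(u)-\phi(p)-\phi'(p)(u-p)\geq 0$ for the tangent-line remainder of $\phi$ at $p$, the linear term then averages out and
\begin{equation}
\frac1N\sum_{i=1}^N\bigl(\phi(u_{p,i}) - \phi(p)\bigr) = \frac1N\sum_{i=1}^N \psi_p(u_{p,i}).
\end{equation}
This is the quantitative form of the convexity (Jensen) argument of \cite[Proposition 4.3]{SSlg}: the mere nonnegativity of $\psi_p$ already recovers the minimality $W(\j_{\{a_i\}})\geq W(\mz)$.

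The heart of the proof is then a pointwise lower bound, uniform in $p\in\{1,\dots,\lfloor N/2\rfloor\}$ and $u\in(0,N)$:
\begin{equation}
\psi_p(u) \geq c\,\min\Bigl(\frac{(u-p)^2}{p^2},\,1\Bigr)
\end{equation}
for a universal $c>0$. I would prove this in two regimes. Using $\frac{2}{\pi}\theta\leq\sin\theta\leq\theta$ on $[0,\pi/2]$ together with $\csc^2\geq 1$, one checks that $\phi''(t)\geq c_1/p^2$ for all $t$ in the window $[p/2,2p]$ (splitting according to whether $2p\leq N/2$ or $p>N/4$, so that the degeneracy of $\csc^2$ near the midpoint $N/2$ is harmless, since there $1/p^2\asymp 1/N^2\asymp\phi''$). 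The integral remainder $\psi_p(u)=\int_p^u (u-t)\phi''(t)\,dt$ then gives $\psi_p(u)\geq \frac{c_1}{2}\,(u-p)^2/p^2$ for $u\in[p/2,2p]$, which is exactly the $\min$ since $|u-p|\leq p$ there. Outside this window I would use that $\psi_p$ is decreasing on $(0,p)$ and increasing on $(p,N)$ (as $\psi_p'=\phi'-\phi'(p)$ with $\phi'$ increasing): for $u\leq p/2$ one has $\psi_p(u)\geq\psi_p(p/2)\geq c_1/8$, and for $u\geq 2p$ one has $\psi_p(u)\geq\psi_p(2p)\geq c_1/2$; in both cases the $\min$ is bounded above by a constant, so the same estimate survives with a possibly smaller $c$.

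Finally, I would combine and sum. The per-$p$ estimate gives $\frac1N\sum_i(\phi(u_{p,i})-\phi(p))\geq c\,\frac1N\sum_i\min(b_{p,i}^2/p^2,1)$. Since every term of the $p$-sum is nonnegative, I may discard the contributions of $p>N/2$ and keep only $p=1,\dots,\lfloor N/2\rfloor$, which yields \eqref{eqQLB} with $C=\pi c$. The main obstacle is precisely the uniform pointwise estimate on $\psi_p$: one must control the single convex profile $\phi$ simultaneously across all scales $p$, and across both the near-diagonal (genuinely quadratic) regime and the far regime where the deviation saturates at $1$, verifying that the constant does not degenerate as $p$ approaches $N/2$, where $\phi''$ attains its minimum. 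The remaining steps — the reindexing, the mean-zero identity, and the final summation — are routine.
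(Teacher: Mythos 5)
Your proof is correct and takes essentially the same approach as the paper: both start from the explicit periodic formula, reindex by the gap variables $u_{p,i}$, use the mean constraint $\sum_{i=1}^N u_{p,i}=pN$ to eliminate the linear term in a quantitative convexity (Taylor) expansion of $\log|\sin(\pi\,\cdot/N)|$, and extract the $\min\bigl(b_{p,i}^2/p^2,1\bigr)$ terms from the estimate $\csc^2\theta\geq 1/\theta^2$. The only difference is in execution: the paper uses a Lagrange-form remainder at an intermediate point $x_{p,i}$ together with the algebraic inequality $\frac{b^2}{p^2+b^2}\geq\frac13\min(b^2/p^2,1)$, whereas you use the integral remainder on the window $[p/2,2p]$ plus monotonicity of the tangent-line gap $\psi_p$ outside it --- two interchangeable implementations of the same idea.
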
 
The proof is given in section \ref{preuvequantitative}. The quantity $b_{p,i}$ in the right-hand side of \eqref{eqQLB} measures a local defect with respect to the lattice: the spacing error $u_{p,i} - p$ between two $p$-th neighbours (in $\mz$ two $p$-th neighbours are always at distance $p$).

We then state our main theorem and its consequence for the minimization problem. The theorem gives for any translation-invariant probability measure $P \in \probas(\mathcal{A}_1)$ of finite energy a lower bound on $\bw(P) - W(\mz)$ in terms of the two-point correlation functions of the stationary point process associated to $P$.

\begin{theo} \label{NonPer} Let $P$ be a stationary probability measure concentrated on $\mathcal{A}_1$ such that $\bw(P)$ is finite, let $P_{\Lambda}$ be the push-forward of $P$ by the map (\ref{pushforward}) and $\rho_{2, P_{\Lambda}}$ be the two-point correlation function of $P_{\Lambda}$. Then for any function $\varphi \in C^{1}_c(\mr \times \mr)$ we have
\begin{equation}
\label{inegsurw}
\left| \int  (\ro_{2,{P_{\Lambda}}}  -\ro_{2, P_{\mz}}) \varphi  \right|\le C_{\varphi} (\bw(P)+C)^{\hal} \( \bw(P) - W(\mz) \)^{\hal} 
\end{equation}
with $C_{\varphi}$ depending only on $||\varphi||_{\infty}, ||\nabla\varphi||_{\infty}$ and on a $T \geq 1$ such that $\varphi$ is supported on $[-T, T]^2$, and $C$ a universal constant.
\end{theo}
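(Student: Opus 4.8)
The plan is to prove \eqref{inegsurw} first in the case where $P=P_\Lambda$ is the stationary process \eqref{PLambda} attached to a \emph{periodic} configuration $a_1<\cdots<a_N$ in $[0,N]$ (density one), where Lemma \ref{QLB} applies directly, and then to reach a general stationary $P$ of finite energy by the periodic approximation argument underlying the crystallization theorem of \cite{SSlg}. Concretely, one approximates $P$ by processes $P_{\Lambda_n}$ with $\Lambda_n$ periodic of period $N_n\to\infty$, in such a way that $\bw(P_{\Lambda_n})\to\bw(P)$ and $\rho_{2,P_{\Lambda_n}}(\varphi)\to\rho_{2,P_\Lambda}(\varphi)$ for every $\varphi\in C^1_c(\mr\times\mr)$, and then passes to the limit in the periodic estimate. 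Since $W$ is translation invariant as a function of the configuration (Lemma \ref{depoint}), one has $\bw(P_{\Lambda_n})=W(\j_{\{a^n_i\}})$, so the periodic estimate is \emph{exactly} \eqref{inegsurw} for $P_{\Lambda_n}$.

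For the periodic case I would first make the two-point correlation explicit. Using $N$-periodicity and integrating the translation parameter over $[0,N]$, one obtains, with $u_{p,i}=a_{i+p}-a_i$,
\[
\rho_{2,P_\Lambda}(\varphi)=\sum_{p\geq 1}\frac1N\sum_{i=1}^N\int_\mr\big(\varphi(s,s+u_{p,i})+\varphi(s,s-u_{p,i})\big)\,ds,
\]
and the same identity for $P_{\mz}$ with each $u_{p,i}$ replaced by $p$ (the pair correlation of the lattice). Subtracting, $\rho_{2,P_\Lambda}(\varphi)-\rho_{2,P_{\mz}}(\varphi)$ is a sum over $(p,i)$ of terms $\int_\mr\big(\varphi(s,s+u_{p,i})-\varphi(s,s+p)\big)\,ds$ and their mirror images. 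Assuming $\supp\varphi\subset[-T,T]^2$, each such $s$-integral runs over an interval of length $2T$ and is bounded by $C_\varphi\min(|b_{p,i}|,1)$ — by the mean value theorem and $\|\nabla\varphi\|_\infty$ when $|b_{p,i}|\le 1$, and by $\|\varphi\|_\infty$ together with the support constraint otherwise — and it vanishes unless $\min(p,u_{p,i})\le 2T$. Let $S$ be the set of such relevant indices $(p,i)$.

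This yields $|\rho_{2,P_\Lambda}(\varphi)-\rho_{2,P_{\mz}}(\varphi)|\le C_\varphi\sum_{(p,i)\in S}\frac1N\min(|b_{p,i}|,1)$, and I would write $\min(|b_{p,i}|,1)=\sqrt{\min(b_{p,i}^2/p^2,1)}\,\sqrt{r_{p,i}}$ with $r_{p,i}=\min(b_{p,i}^2,1)/\min(b_{p,i}^2/p^2,1)$, then apply Cauchy–Schwarz over $S$. By Lemma \ref{QLB} the first factor is $\le C\,\big(W(\j_{\{a_i\}})-W(\mz)\big)^{1/2}$. For the second factor one checks $r_{p,i}\le C_T$ on $S$: one always has $r_{p,i}\le p^2$, so $r_{p,i}\le(2T)^2$ when $p\le 2T$; and when $p>2T$, membership in $S$ forces $u_{p,i}\le 2T$, whence $|b_{p,i}|/p=1-u_{p,i}/p\ge 1-2T/p$, which is $\ge 1/2$ for $p\ge 4T$ (giving $r_{p,i}\le 4$), the finitely many $p\in(2T,4T)$ contributing $r_{p,i}\le p^2\le C_T$. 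Hence $\sum_{(p,i)\in S}\frac1N r_{p,i}\le C_T\big(1+\tfrac1N\sum_{i=1}^N M_i\big)$, where $M_i$ is the number of points of $\Lambda$ within distance $2T$ of $a_i$.

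It then remains to control the normalized number of close pairs $\tfrac1N\sum_i M_i$ by the energy, via an a priori estimate of the form $\tfrac1N\sum_i M_i\le C_T\,(W(\j_{\{a_i\}})+C)$, reflecting that a configuration with abnormally many nearby points carries a large renormalized energy. It is precisely this step that produces the factor $(\bw(P)+C)^{1/2}$ after taking expectations and passing to the limit; combined with the previous display and Lemma \ref{QLB} it gives \eqref{inegsurw} in the periodic case, and the general case follows by approximation. I expect the two genuine difficulties to be (i) the periodic-approximation step with \emph{simultaneous} convergence of the energy and of the two-point correlation, which requires a screening/construction argument, and (ii) the point-counting a priori bound used above; by contrast, the Cauchy–Schwarz bookkeeping around the truncations $\min(\cdot,1)$ and the support of $\varphi$ is routine once (i) and (ii) are in hand.
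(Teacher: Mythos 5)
Your overall strategy coincides with the paper's: prove the estimate first for the stationary process attached to a periodic configuration (via Lemma \ref{QLB}, Cauchy--Schwarz, and an a priori bound on the number of close pairs), then reach a general stationary $P$ by a screening-based periodic approximation. Your periodic-case computation is correct and in fact tidier than the paper's Lemma \ref{LemmeDeltarho}: integrating the translation parameter over all of $\mr$ gives the exact identity $\rho_{2,P_\Lambda}(\varphi)=\sum_{p\ge1}\frac1N\sum_{i=1}^N\int_\mr\bigl(\varphi(s,s+u_{p,i})+\varphi(s,s-u_{p,i})\bigr)\,ds$, whereas the paper matches the translation averages of $\Lambda$ and of $\mz$ interval by interval, which creates additional error terms (its bound \eqref{borneChasles} and the $\min(|b_{1,i}|,1)$ remainder). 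Your difficulty (ii) is not a gap relative to the paper's toolkit: the bound $\frac1N\sum_iM_i\le C_T\,(W(\j_{\{a_i\}})+C)$ is precisely the paper's preliminary Lemma \ref{discrepa} (second moment of $\mathcal{N}(E,T)$ bounded by $C_T(C+\bw(P))$, itself resting on the discrepancy estimate of \cite{SSlg}), applied to $P_\Lambda$, since $\frac1N\sum_iM_i\le\frac1{2T}\int\mathcal{N}(\mathcal{C},2T)^2\,dP_\Lambda(\mathcal{C})$ (each close pair remains inside the window for a set of translates of length at least $2T$; assume $N\ge 4T$, which is free since an $N$-periodic configuration is also $rN$-periodic). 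Two small bookkeeping points you should add: indices $p\in(N/2,N]$ occurring in $S$ must be folded back to $N-p$ by periodicity before invoking Lemma \ref{QLB}, which only controls $p\le N/2$; and the mirror-image terms are handled identically.

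The genuine gap is your step (i), and it cannot be imported as a black box: it is the bulk of the paper's proof. The screening result \cite{SSlg} does not apply to $P$-a.e.\ field with uniform parameters, so the paper first restricts $P$ to a good set $G_{\epsilon}$ of almost full measure with uniform controls, and this conditioning destroys exact stationarity; the resulting error in the correlation comparison is controlled only thanks to the almost-translation-invariance of $G_{\epsilon}$, producing an $O_{\epsilon,\varphi,P}(1/R)$ term. Moreover, the approximant is not a single sequence of periodic processes $P_{\Lambda_n}$ as you posit, but a \emph{mixture}: the pushforward of $P_{\epsilon}$ under $E\mapsto\fint_{-R/2}^{R/2}\delta_{\theta_t\cdot \bar E_R}\,dt$; consequently the periodic estimate must be applied for each $E$ separately and then integrated, using Jensen/Cauchy--Schwarz to pull the expectation inside the product of square roots. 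The correlation comparison also picks up a boundary-layer error of size $\alpha C_{\varphi}(\bw(P)+C)$ from the regions of width $\alpha R$ where screening alters the configuration, which forces the final tuning $\alpha\sim\eta/(C_\varphi(\bw(P)+C))$ before sending $R\to\infty$ and $\epsilon\to0$. Finally, screening yields only the one-sided bound $W(\bar E_R)\le W(E)+\alpha$, not the convergence $\bw(P_{\Lambda_n})\to\bw(P)$ your sketch requires; the one-sided bound does suffice here because the right-hand side of \eqref{inegsurw} is increasing in the energy, but that is an observation your argument needs to make. So your outline identifies the right structure, but it defers exactly the constructions where the theorem's real work lies.
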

The proof is given in Section \ref{preuveTheoreme}. An easy consequence of Theorem \ref{NonPer} is the following uniqueness result:
\begin{coro} \label{corollaire1}  For $m > 0$, the only minimizer of $\bw$ on the set of stationary probability measures concentrated on $\mathcal{A}_m$ is $P_{\frac{1}{m}\mz}$.
\end{coro}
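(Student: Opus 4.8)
The plan is to deduce the corollary from Theorem~\ref{NonPer}, after first reducing to the case $m=1$ by the scaling relation \eqref{scaling}. The map $\Phi:E\mapsto E':=\frac1m E(\cdot/m)$ sends $\admissible$ to $\ainfty$ and rescales the underlying points by the homothety of ratio $m$. It intertwines translations, $\tau_\lambda\circ\Phi=\Phi\circ\tau_{\lambda/m}$, so if $P\in\probas(\admissible)$ is stationary then $\Phi_\# P\in\probas(\ainfty)$ is stationary as well; moreover, integrating \eqref{scaling} gives $\bw(P)=m(\bw(\Phi_\# P)-\pi\log m)$. Since $m>0$, this is an increasing affine function of $\bw(\Phi_\# P)$, so $P$ minimizes $\bw$ over stationary measures on $\admissible$ if and only if $\Phi_\# P$ minimizes $\bw$ over stationary measures on $\ainfty$, and under the bijection $\Phi_\#$ one checks that $P_{\frac1m\mz}$ corresponds to $P_{\mz}$. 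It therefore suffices to prove that $P_{\mz}$ is the unique stationary minimizer of $\bw$ on $\ainfty$, and we fix $m=1$ from now on.

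By the crystallization Theorem~\ref{thmini} we have $W(E)\ge W(\mz)$ for every $E\in\ainfty$, hence $\bw(P)\ge W(\mz)$ for every stationary $P$, with equality exactly when $W(E)=W(\mz)$ for $P$-a.e.\ $E$. As $P_{\mz}$ is carried by translates of $\mz$, all of energy $W(\mz)$ by translation invariance of $W$, we have $\bw(P_{\mz})=W(\mz)$ and $P_{\mz}$ is a minimizer. Conversely, let $P$ be any stationary minimizer; then $\bw(P)=W(\mz)$ is finite and $\bw(P)-W(\mz)=0$. Feeding this into \eqref{inegsurw}, whose first factor $(\bw(P)+C)^{\hal}$ is finite while the factor $(\bw(P)-W(\mz))^{\hal}$ vanishes, we obtain $\int(\ro_{2,P_\Lambda}-\ro_{2,P_\mz})\varphi=0$ for every $\varphi\in C^1_c(\mr\times\mr)$, where $P_\Lambda$ is the push-forward of $P$ by \eqref{pushforward}. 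Since both are Radon measures (Remark~\ref{meandensitym}) and $C^1_c$ is a separating class, this forces $\ro_{2,P_\Lambda}=\ro_{2,P_\mz}$.

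The heart of the matter is to turn this equality of two-point correlations into the identification $P_\Lambda=P_{\mz}$; this is the main obstacle, because correlation functions do not determine a general point process, and we must instead exploit the rigidity of the specific measure $\ro_{2,P_\mz}$. A direct computation from \eqref{PLambda} shows that $\ro_{2,P_\mz}$ is carried by the lines $\{y-x=l\}$ for $l\in\mz\setminus\{0\}$, with unit Lebesgue density along each. Equality of the correlations then yields three successive constraints. First, $\ro_{2,P_\Lambda}$ charges only integer differences, so the expected number of ordered pairs of distinct points with noninteger difference vanishes on every bounded window, whence $P_\Lambda$-a.s.\ all pairwise gaps are integers and the configuration $\mathcal{C}$ lies in a single random coset $\xi+\mz$. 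Second, the intensity equals $1$ (Remark~\ref{meandensitym}); since such a coset meets $[0,1)$ in at most one point, $N_{[0,1)}$ is $\{0,1\}$-valued with mean $1$, so $N_{[0,1)}=1$ a.s.\ and there is a unique point $\xi\in[0,1)\cap\mathcal{C}$, with $\mathcal{C}\subset\xi+\mz$. Third, for each fixed $l\in\mz\setminus\{0\}$ the unit density of $\ro_{2,P_\Lambda}$ along $\{y-x=l\}$ says that the expected number of pairs at difference $l$ with first coordinate in $[0,1)$ equals $1$; this count is exactly $\mathbf{1}[\xi+l\in\mathcal{C}]\le1$, so $\xi+l\in\mathcal{C}$ a.s., and a countable union over $l$ gives $\mathcal{C}=\xi+\mz$ a.s.

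Thus $P_\Lambda$ is supported on translates $\xi+\mz$ of the lattice, parametrized by the fractional part $\xi\in\mr/\mz$. Under $\theta_t$ one has $\theta_t\mathcal{C}=(\xi-t)+\mz$, so stationarity makes the law of $\xi$ invariant under all rotations $\xi\mapsto\xi-t\ (\bmod\ 1)$, and the only such probability measure on the circle is the uniform one. Hence $P_\Lambda=P_{\mz}$, and since by Lemma~\ref{depoint} the finite-energy electric field is a measurable (injective) function of its point configuration, the finiteness of $\bw(P)$ together with $P_\Lambda=P_{\mz}$ forces $P$ itself to be the electric measure $P_{\mz}$. Returning through the scaling $\Phi$ gives $P=P_{\frac1m\mz}$, completing the proof. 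The only delicate point is the third paragraph: Theorem~\ref{NonPer} delivers merely equality of two-point correlations, and the work lies in extracting from the lattice form of $\ro_{2,P_\mz}$ enough rigidity—integrality of all gaps, then completeness of the lattice from the density-one pair statistics—to pin the process down up to the unavoidable uniform shift.
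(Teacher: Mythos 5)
Your proposal is correct and follows essentially the same route as the paper: reduce to $m=1$ by the scaling relation \eqref{scaling}, apply Theorem \ref{NonPer} with $\bw(P)-W(\mz)=0$ to get $\rho_{2,P_\Lambda}=\rho_{2,P_\mz}$, and then exploit the lattice rigidity of $\rho_{2,P_\mz}$ (integrality of all gaps, then completeness of the translate of $\mz$, then uniformity of the shift by stationarity) to conclude $P_\Lambda=P_\mz$. Your rigidity step is a slightly more explicit variant of the paper's (which tests against indicators of integer and non-integer differences in windows $[-T,T]^2$), and you are more careful than the paper about lifting $P_\Lambda=P_\mz$ back to the electric-field measure via Lemma \ref{depoint}, but the strategy is the same.
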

Theorem \ref{NonPer} also allows to track down the crystallization of the statistical mechanics model via the convergence (in distributional sense) of the two-point correlation functions of  to $\rho_{2,\mz}$ when the inverse temperature $\beta \rightarrow + \infty$, as stated in \cite[Corollary 1.14.]{SSlg}. This was one of the main motivations for this paper.

\subsection{Connection with the Log-gas Hamiltonian}
\label{motivation}
Let us now briefly give a motivation for studying the renormalized energy and its minimization (for a thorough study we refer again to \cite{SS2d}, \cite{SSlg}, \cite{RS} and to the survey \cite{SZurich}). 

Starting again from $w_N$, when the potential $V$ in the definition \eqref{wn} is lower semi-continuous and satisfies the growth assumption $V(x) - 2 \log|x| \to + \infty$ (when $|x| \to + \infty$), it is known that for any sequence $\{\vec{x_N}\}_N$ of minimizers of $w_N$, the empirical measures $\mu_N := \frac{1}{N} \sum_{i=1}^N \delta_{x_i}$ converge weakly as $N \to +\infty$ to some measure $\mu_0$ on $\mr$, called the “equilibrium measure”. We assume that the equilibirum measure $\mu_0$ has a compact support $\Sigma$ which is a finite union of compact intervals, and is absolutely continous with respect to the Lebesgue measure with a density $m_0 \in L^{\infty}(\mr)$. The result is in fact much stronger, since the convergence of the empirical measures to $\mu_0$ holds not only for minimizers but almost surely under the Gibbs measure associated to $w_N$ at any finite temperature (see \cite{benarousguionnet},\cite{hiaipetz}). The renormalized energy appears as the Hamiltonian $w_N$ at second order: there is an exact splitting formula
$$w_N(x_1, \dots, x_N)=  N^2 \I (\mo)- N \log N + N F_N(\nu)$$
where $\I (\mo)$ is a “first-order” potential energy associated to the equilibrium measure, and $F_N$ is a function of the finite point configuration $\sum_{i=1}^N \delta_{x_i}$. To any such finite point configuration we associate a probability measure $P_{\nu_N}$ on $\Sigma \times \mathcal{A}$ obtained by averaging over $x \in \Sigma$ the electric field $E_N$ associated to the finite configuration $\nu'_N = \sum_{i=1}^N \delta_{N x_i}$ (for finite configurations such a field always exists), translated by $Nx$:
$$
P_{\nu_N} := \fint_{\Sigma} \delta_{\left(x, E_N(Nx+\cdot)\right)}.
$$
Let us emphasize that in this setting the scaling $x_i \mapsto x'_i = Nx_i$ is necessary: since we know that the empirical measures $\frac{1}{N}  \sum_{i=1}^N \delta_{x_i}$ typically converge to a compactly supported measure $\mu_0$ on $\mr$ it is relevant to scale the distances by $N$ so that the spacing between two consecutive points becomes of order $1$. If $\{\nu_N\}_N$ is a sequence of finite point configurations such that $\{F_N(\nu_N)\}_N$ is bounded then it is proven in \cite[Theorem 3.]{SSlg} that up to extraction the sequence $\{P_{\nu_N}\}_N$ converges to some admissible probability measure $P$ (the definition of “admissible” is given below) and $\liminf_{N \to + \infty}  F_N(\nu_N) \geq \tW(P)$. More precisely, the sequence of functionals $\{F_N\}_N$ (for each $N$ we can see $F_N$ as a function of probability measures $P \in \mathcal{P}(\Sigma \times \mathcal{A})$ which is infinite outside the image of the map $\nu_N \mapsto P_{\nu_N}$) has $\tW$ for $\Gamma$-limit (see \cite{braides}), which implies that the minimizers of $F_N$ (hence of $w_N$) converge to minimizers of $\tW$. This reduces the second-order study of the Hamiltonian $w_N$ in the limit $N \to \infty$ to the study of $\tW$ on admissible probabilities. 

\begin{defi}[Admissible probabilities] \label{admissible}
We say $P\in \mathcal{P}(\Sigma \times \mathcal{A})$ is \em{admissible} if 
\begin{itemize}
\item The first marginal of $P$ is the normalized Lebesgue measure on $\E$.
\item It holds for $P$-a.e. $(x, E)$ that $\j \in \mathcal{A}_{m_0(x)}$.
\item $P$ is invariant by the maps $(x,E) \mapsto (x, E(\lambda + \cdot))$ for all $\lambda \in \mr$ (this is a weaker assumption than the “$T_{\lambda(x)}$-invariance” of \cite{SSlg} but it is sufficient for our purpose).
\end{itemize} 
When $P$ is admissible, we denote by $\tW$ the expectation of $W$ under $P$:
\begin{equation}
\label{Wtilde}
\tW(P) = \frac{|\E|}{\pi} \int W(E)\, dP(x, E). 
\end{equation}
\end{defi}
For Log-gases, Theorem \ref{NonPer} implies the following uniqueness result:
\begin{coro} \label{corollaire2}  The equilibrium measure $\mu_0$ and its density $m_0$ being fixed, the only minimizer of $\tW$ on the set of admissible probability measures is given by \begin{equation} 
\label{P0}
 P_0=\frac{dx_{|\E}}{|\E|} \otimes  P_{\frac{1}{m_0(x)} \mathbb{Z}}.\end{equation}
\end{coro}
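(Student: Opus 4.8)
The plan is to exploit the fact that the first marginal of any admissible $P$ is fixed (the normalized Lebesgue measure on $\E$) in order to \emph{disintegrate} $P$ with respect to the variable $x$, thereby reducing the minimization of $\tW$ to a fiberwise application of Corollary~\ref{corollaire1}. Writing $dP(x,E) = \frac{dx_{|\E}}{|\E|}\, dP^x(E)$, the second admissibility condition guarantees that for a.e.\ $x\in\E$ the fiber $P^x$ is a probability measure carried by $\mathcal{A}_{m_0(x)}$, while the third one — invariance under $(x,E)\mapsto(x,E(\lambda+\cdot))$, which acts only on the electric field and leaves $x$ untouched — guarantees that each $P^x$ is moreover \emph{stationary}. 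By the definition \eqref{Wtilde} and Fubini's theorem this gives
\begin{equation*}
\tW(P) = \frac{|\E|}{\pi}\int_{\E}\left(\int W(E)\, dP^x(E)\right)\frac{dx}{|\E|} = \frac{1}{\pi}\int_{\E} \bw(P^x)\, dx,
\end{equation*}
so that $\tW$ appears as an integral over $x\in\E$ of the mean renormalized energies $\bw(P^x)$ of the individual stationary fibers.

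The crucial observation is that there is \emph{no coupling} between distinct fibers: the set of admissible $P$ with the prescribed first marginal is exactly the set of measurable families $(P^x)_{x\in\E}$ with $P^x$ stationary on $\mathcal{A}_{m_0(x)}$, and each $\bw(P^x)$ may be minimized independently. For each fixed $x$ with $m_0(x)>0$, Corollary~\ref{corollaire1} applied with $m=m_0(x)$ asserts that $\bw$ attains its minimum over stationary measures on $\mathcal{A}_{m_0(x)}$ \emph{uniquely} at $P_{\frac{1}{m_0(x)}\mz}$, and the crystallization theorem identifies this minimal value as $\bw(P_{\frac{1}{m_0(x)}\mz}) = -\pi m_0(x)\log(2\pi m_0(x))$. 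Hence for a.e.\ $x$,
\begin{equation*}
\bw(P^x) \geq -\pi m_0(x)\log(2\pi m_0(x)),
\end{equation*}
with equality if and only if $P^x = P_{\frac{1}{m_0(x)}\mz}$.

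Integrating this pointwise bound in $x$ over $\E$ yields $\tW(P)\geq -\int_{\E} m_0(x)\log(2\pi m_0(x))\,dx = \tW(P_0)$, and the equality case forces $P^x = P_{\frac{1}{m_0(x)}\mz}$ for a.e.\ $x$, i.e.\ $P = \frac{dx_{|\E}}{|\E|}\otimes P_{\frac{1}{m_0(x)}\mz} = P_0$. This establishes simultaneously that $P_0$ is a minimizer and that it is the only one. The main obstacle I expect is measure-theoretic rather than conceptual: one must justify the existence and measurability of the disintegration $x\mapsto P^x$, check that $x\mapsto \bw(P^x)$ is measurable so that Fubini and the pointwise minimization are legitimate, and verify that selecting the fiberwise minimizer $P_{\frac{1}{m_0(x)}\mz}$ yields a genuinely admissible (measurable) family — a measurable-selection point. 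One should also confirm finiteness of $\tW(P_0)$, which follows from $m_0\in L^\infty(\mr)$ together with the compactness of $\E$ (so that $\int_\E m_0\log(2\pi m_0)$ converges), and dispose of the negligible set where $m_0(x)=0$, on which the fiber contributes nothing to the integral.
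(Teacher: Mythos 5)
Your proposal is correct and takes essentially the same route as the paper: disintegrate $P$ over $x \in \Sigma$, use the invariance condition of admissibility to show each fiber $P^x$ is stationary on $\mathcal{A}_{m_0(x)}$ (this is the paper's Remark \ref{TlambdathenT}), and apply Corollary \ref{corollaire1} fiberwise to force $P^x = P_{\frac{1}{m_0(x)}\mz}$ for a.e.\ $x$. The measure-theoretic points you flag (measurability of the disintegration and admissibility of the explicit family $x \mapsto P_{\frac{1}{m_0(x)}\mz}$, i.e.\ that $P_0$ is itself admissible) are exactly what the converse part of that remark settles in the paper.
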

Let us note that $P_{\frac{1}{m}\mz}$ is not well-defined for $m = 0$, however we may assume that the set $\{x \in \Sigma, m_0(x) = 0\}$ has zero Lebesgue measure. Corollary \ref{corollaire2} as well as Corollary \ref{corollaire1} are proven in Section \ref{uniqueness}.

The minimization problem is of physical relevance, indeed minimizers of an Hamiltonian describe the behaviour of the physical system at zero temperature. It is believed (see e.g. \cite[Conjecture 9.4.]{CohnKumar}) that for a wide range of interactions the minimizing infinite configurations are lattices. The one-dimensional crystallization is somewhat easier to prove (see e.g. \cite{lenard1D} for positive results concerning the Coulomb interaction in $1d$, \cite{bl1}, \cite{bl2}, \cite{kunz} for one-dimensional long-range order, \cite{avilalastsimon} for zeroes of orthogonal polynomials), but the higher-dimensional cases are largely open (see \cite{Theil1} for a result in $d=2$ and \cite{Theil2} for recent progress in $d=3$). 
\section{Proof of the results}
\subsection{Preliminary bounds on the density of points} \label{secdiscr}
Since we are dealing with two-point correlation functions, we will often need to bound the variance (for some point process) of the number of points that lie in some fixed interval. For this purpose we use a deviation estimate of \cite{SSlg} which gives a lower bound on the renormalized energy in terms of the local non-neutrality of a point configuration (together with the uniform background). We summarize the consequences  for correlation functions in the following lemma:

\begin{lem}\label{discrepa} If $E \in \mathcal{A}_1$ is an electric field, we denote by 
$\mathcal{N}(E,T)$ the number of points of $\frac{1}{2\pi} \div E +  \delta_{\mr}$ that lie in the inverval $[-T, T]$. Let $P$ be a stationary probability measure concentrated on $\mathcal{A}_1$. 
\begin{itemize}
\item We have, for any $T > 0$,
$$\Esp_{P} \left[ \mathcal{N}(E,T) \right] = 2T.$$ In particular, the one-point correlation function (the intensity) of the point process associated to $P$ may be identified as (testing against) the function $\rho_{1,P_{\Lambda}} \equiv 1$. This, together with the scaling relation \eqref{scaling}, proves the first claim of Remark \ref{meandensitym}.
\item The following bound holds:
$$\int \mathcal{N}(E,T)^2 dP(E) \leq C_T \left(C + \bw(P)\right).$$
where $C_T$ depends only on $T$ and $C$ is  universal.
\end{itemize}
\end{lem}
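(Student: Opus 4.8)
The plan is to treat the two assertions separately: the first by a soft argument using only stationarity and the divergence structure of $\mathcal{A}_1$, the second by combining the deviation estimate of \cite{SSlg} with an averaging over the stationary process. We may assume $\bw(P) < +\infty$, since otherwise the second bound is vacuous. Throughout I would write $D(E,T) := \N(E,T) - 2T$ for the signed local discrepancy, so that $\N(E,T)^2 = 4T^2 + 4T\,D(E,T) + D(E,T)^2$; once the first moment is known the whole problem collapses to controlling $\Esp_P[D(E,T)^2]$.

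First I would establish the mean. By stationarity of $P$ the expected point measure $\Esp_P[\nu]$ is translation-invariant on $\mr$, hence of the form $\rho_1\,dx$ for a constant intensity $\rho_1$, so that $\Esp_P[\N(E,T)] = 2\rho_1 T$. To identify $\rho_1 = 1$ I would take the $P$-expectation of the admissibility relation \eqref{eqj}, which gives $\div \Esp_P[E] = 2\pi(\rho_1 - 1)\dr$. Now $\Esp_P[E]$ is itself invariant under horizontal translations, hence depends only on the $y$-variable; it is odd in $y$ by the reflection symmetry across $\mr$ of fields generated by charges on the line; and the divergence relation forces it to be \emph{constant} on each half-plane $\{y>0\}$, $\{y<0\}$. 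Such a nonzero constant field has infinite (renormalized) energy, so by convexity of the quadratic part of $W$ and the Jensen inequality the finiteness of $\bw(P)$ rules it out, forcing $\rho_1 = 1$. This yields $\Esp_P[\N(E,T)] = 2T$ and $\Esp_P[D(E,T)] = 0$; the identification of the one-point correlation function, and its version for general $m$ through the scaling relation \eqref{scaling}, then follow at once.

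For the variance bound I would invoke the deviation estimate of \cite{SSlg}, which bounds $W(E)$ from below by (a constant multiple of) a spatial average of the squared local non-neutralities of the configuration. The essential point is that $W(E)$ is an energy \emph{density}, so no pointwise control of a single window $D(E,T)^2$ by $W(E)$ can hold — a localized defect leaves the density unchanged — and the link must pass through the stationarity of $P$. Concretely, integrating the deviation estimate against $dP(E)$ and applying Fubini together with the translation-invariance of $P$ (every window of length $2T$ has the same law) collapses the spatial average of squared discrepancies into the single expectation $\Esp_P[D(E,T)^2]$, up to a factor depending only on $T$. This produces $\Esp_P[D(E,T)^2] \le C_T(\bw(P) + C)$, and combining it with $\Esp_P[\N(E,T)] = 2T$ gives $\int \N(E,T)^2\,dP(E) = 4T^2 + \Esp_P[D(E,T)^2] \le C_T(\bw(P)+C)$ after renaming constants.

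The main obstacle is precisely the passage from the global, renormalized density $W(E)$ to control of a single fixed window, and it is twofold. First, the admissible field $E$ fails to be in $L^2$ near the points of $\Lambda$, so any flux/Cauchy--Schwarz estimate relating $D(E,T)$ to $\int|E|^2$ must be performed at the renormalized level, truncating balls $B(p,\eta)$ around the points and absorbing the $\pi\log\eta$ divergence — this is exactly what the deviation estimate of \cite{SSlg} packages. Second, converting the averaged lower bound into a bound for a fixed window relies on the stationary averaging above, where one must verify that the expected renormalized local energy in a window of size comparable to $T$ is indeed controlled by $\bw(P)+C$. Once these two points are secured, the statement follows from the elementary algebra of the splitting $\N = 2T + D$.
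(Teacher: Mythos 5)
Your treatment of the second bullet follows essentially the paper's route: the deviation estimate of \cite{SSlg} in a fixed window, integrated against $dP$, with stationarity used to convert the expected local energy into $\bw(P)$ times the window length (the identity $\int W(E,\chi_T)\,dP(E)=\bw(P)\int\chi_T$ of \eqref{WjchiT}, which you correctly single out as the point to verify). Two technical points are glossed over, though both are fixable for fixed $T$: first, the deviation estimate controls $D(E,T)^2\min\left(1,|D(E,T)|/T\right)$ rather than $D(E,T)^2$ (harmless here since $D^2\le D^2\min(1,|D|/T)+T^2$, but it is the reason the paper must track powers of $T$); second, comparing the local energy density with $W(E,\chi_T)$ produces boundary terms of size $n(\log n+1)$, where $n$ is a point count near the edges of the window. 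These boundary terms are themselves comparable to the quantity being estimated, so they cannot simply be discarded: the paper absorbs them into the left-hand side via Cauchy--Schwarz and a bootstrap (the steps \eqref{deviation4}--\eqref{bornediscrepance}), and a complete write-up of your plan would need the same absorption.

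The genuine gap is in your first bullet, and it is a circularity. Your soft argument presupposes that the intensity $\rho_1$ and the mean field $\Esp_P[E]$ exist, i.e. that $\Esp_P[\mathcal{N}(E,T)]<\infty$ and that $E$ is locally $P$-integrable. Neither is known a priori: a stationary point process may well have infinite intensity, and the only hypothesis that excludes this is finiteness of $\bw(P)$ --- but extracting moment bounds from the energy is precisely what the deviation estimate (your second bullet) accomplishes, so bullet one cannot be run ``softly'' before bullet two. Moreover, even granting well-definedness, your Jensen step is not actually softer than the paper's argument: to know that the mean field would have infinite energy per unit length you need the lower bound on the renormalized energy density ($g\ge -C$, to control $\int_{|y|>1}|E|^2$ on a window by $W(E,\chi_R)+CR$) together with the same stationarity identity \eqref{WjchiT}; and the reflection-symmetry claim, besides being unnecessary (only the jump of the $y$-component across $\mr$ matters), would itself require Lemma \ref{depoint}. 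The paper proceeds in the opposite order: it proves the variance bound with constants tracked polynomially in $T$, namely $\int D(E,T)^2dP\le CT(\bw(P)+C)+C'_\alpha T^{1+2\alpha}=o(T^2)$, and then deduces the first bullet because stationarity forces $\int\mathcal{N}(E,T)\,dP=T\int\mathcal{N}(E,1)\,dP$ to be exactly linear in $T$, so that $\int\mathcal{N}(E,T)\,dP=2T+o(T)$ pins the intensity to $1$. Repairing your plan amounts to adopting exactly this order; note that your fixed-$T$ version of the second bullet (with an arbitrary constant $C_T$) would not suffice for this deduction --- the $o(T^2)$ growth is essential.
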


\begin{proof} In what follows $C$ is a universal constant, which may vary from line to line. Let $E$ be in $\mathcal{A}_1$ and $T \geq 1$ and let $\Lambda(E) : = \frac{1}{2\pi} \div E + \delta_{\mr}$. Denoting by $D(E, T)$ the quantity $D(E,T) = \mathcal{N}(E,T) - 2T$ ($D$ measures the “discrepancy” between the expected number of points and the actual one, hence is a measure of local non-neutrality), \cite[Proposition 4.6.]{SSlg} reads:
$$ \int_{[-2T, 2T]}  dg \ge -CT + cD(E, T)^2 \min\(1, \frac{|D(E,T)|}{T}\)
$$
where the density $g$ is defined in \cite[Proposition 2.1.]{SSlg} (we quote below the results that we need) and $c > 0$ is universal. From \cite[Proposition 2.1.i)]{SSlg}, we know that $g$ is bounded below by $-C$, hence if $\chi_T$ is a smooth cut-off function satisfying $\chi_T \equiv 1$ on $[-2T, 2T]$ and $\chi_T \equiv 0$ outside $[-2(T+1), 2(T+1)]$ with $\|\nab\chi_T\|_\infty \leq 1$, which we extend by $\bar{\chi}_T(x,y) = \chi_T(x)$ on $\mr^2$, we have:
\begin{equation} \label{deviation2}
\int \bar{\chi}_T dg \ge -CT + cD(E, T)^2 \min\(1, \frac{|D(E,T)|}{T}\).
\end{equation}
We also know from \cite[Proposition 2.1.iii)]{SSlg} that for such a function $\chi_T$ the following bound holds: 
\begin{equation} \label{Wchi} 
\left|W(\j,\chi_T) - \int \bar{\chi}_T\,dg\right|
\le C n (\log n + 1)\|\nab\chi\|_\infty,
\end{equation}
where $n$ is a boundary term bounded by the number of points of the configuration $\Lambda(E)$ in $[-3T, -2T]~\cup~[2T, 3T]$. Combining (\ref{deviation2}) and (\ref{Wchi}) we easily get
$$
D(E, T)^2 \leq C \left(C + T + W(\j,\chi_T) + n (\log n + 1)\right).
$$
Taking the expectation under $P$ yields:
\begin{equation} \label{deviation3bis}
\int D(E)^2 dP(E) \leq C \left(C + T + \int \left(W(\j,\chi_T) + n \log n + n \right)dP(E)\right)
\end{equation}
with $n = n(E)$ bounded by the number of points of $\Lambda(E)$ in $[-3T, -2T] \cup [2T, 3T]$. Since $P$ is stationary the average number (under $P$) of points of $\Lambda(E)$ in any interval of length $T$ is the same. Hence if we write $\N_+(E) = \max(\N(E,T), 2)$ we have by stationarity (and using the Cauchy-Schwarz inequality):
\begin{equation}\label{deviation4}
\int n \log n dP \leq 4 \left(\int \N_+^2 dP\right)^{\hal} \left(\int (\log \N_+)^2 dP \right)^{\hal}.
\end{equation}
Obviously the right-hand side of \eqref{deviation4} also bounds the term $\int n(E) dP(E)$. For any $0 < \alpha \leq \hal$, we may find $C_{\alpha}$ large enough such that 
$$\left(\int \N_+^2 dP\right)^{\hal} \left(\int (\log \N_+)^2 dP \right)^{\hal} \leq C_{\alpha} \left(\int \N_+^2 dP\right)^{\hal + \alpha}.$$
Together with \eqref{deviation3bis} and \eqref{deviation4} we thus obtain:
\begin{equation} \label{deviationrou} \int D(E, T)^2 dP(E) \leq C \left(C + T + \int W(\j,\chi_T) dP(E) + C_{\alpha} \left(\int\N_+^2 dP\right)^{\hal + \alpha}  \right).
\end{equation}
Now by stationarity of $P$ and additivity of $W(E, \cdot)$ (see Definition \ref{def1}) we have for any $R > 0$:
$$\int W(\j,\chi_T) dP(E) = \frac{1}{2R} \int_{-R}^R \int W(\j, \chi_T(x + \cdot)) dP(E) dx = \frac{1}{2R} \int W(E, \mathbf{1}_{[-R,R]} \star \chi_T) dP(E).$$
Let us observe that the family of functions $\{\chi'_{R}\}_{R}$ defined as $$\chi'_R := \frac{\mathbf{1}_{[-R,R]} \star \chi_T}{\int \chi_T}$$ satisfy the conditions of Definition \ref{def1} so that sending $R$ to $+ \infty$ we get: $\int W(\j,\chi'_R) dP(E) = \bw(P)$. This in turn implies that
\begin{equation} \label{WjchiT}
\int W(\j,\chi_T) dP(E) = \bw(P) \left(\int \chi_T\right)  \leq CT(\bw(P) + C).
\end{equation}
Moreover since $D(E,T) = \mathcal{N}(E,T) - 2T$ an elementary computation shows that for any $0 < \alpha \leq \hal$ we have:
\begin{equation} \label{alphaalpha'}
CC_{\alpha} \left(\int\N_+^2 dP\right)^{\hal + \alpha} \leq \hal \int D^2(E,T) dP + C'_{\alpha}T^{1+\alpha}
\end{equation}
where $C, C_{\alpha}$ are the constants in \eqref{deviationrou} and $C'_{\alpha}$ depends only on $\alpha$.
Combining \eqref{deviationrou}, \eqref{WjchiT} and \eqref{alphaalpha'} we get for any $0 < \alpha \leq \hal$:
\begin{equation} \label{bornediscrepance}
\int D(E, T)^2 dP(E)  \leq CT(\bw(P) +C) + C'_{\alpha}T^{1 + 2\alpha}
\end{equation}
with a constant $C'_{\alpha}$ depending only on $\alpha$ and $C$ universal. Equation \eqref{bornediscrepance} implies the following:
\begin{itemize}
\item We have $\int (\N(E,T) - 2T)^2 dP(E) = o(T^2)$ hence $\int (\N(E,T) - 2T) dP(E) = o(T)$ but by stationarity we have $\int \N(E,T) dP(E) = T \int \N(E,1) dP(E)$ so that in fact $$\int \N(E,T) dP(E) = 2T$$ for all $T > 0$, which proves the first claim of the lemma.
\item Since $\N(E,T)^2 \leq 4T^2 + 2 \int D(E, T)^2$ taking $\alpha = \hal$ we also get a bound on the mean square number of points $\int \N(E,T)^2 dP(E)$ as in the second claim of the lemma.
\end{itemize}
\end{proof}

We now use Lemma \ref{discrepa} to show that two-point correlation functions exist as Radon measures for point processes of finite renormalized energy.
\begin{lem} \label{boundphiW} Let $\varphi \in C^{0}_c(\mr \times \mr)$ and $P$ be a stationary probability measure on $\mathcal{A}_1$ such that $\bw(P)$ is finite. Let also $P_{\Lambda}$ be the push-forward of $P$ by the map $E \mapsto \frac{1}{2\pi} \div E +  \delta_{\mr}$. The following bound holds:
\begin{equation}
\left| \int \lc \varphi, \mathcal{C} \rc dP_{\Lambda}(\mathcal{C})\right| \leq C_{\varphi} (\bw(P)+C)
\end{equation}
with $C_{\varphi}$ depending only on $||\varphi||_{\infty}$ and on a $T \geq 1$ such that $\varphi$ is supported on $[-T, T]^2$ and $C$ a universal constant.
\end{lem}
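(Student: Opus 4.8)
The plan is to reduce everything to the second-moment bound already established in Lemma \ref{discrepa}, via an elementary pointwise estimate on the pairing $\lc \varphi, \mathcal{C} \rc$. Since the lemma only asks for a constant $C_\varphi$ depending on $\|\varphi\|_\infty$ and the size $T$ of the support, I expect no analytic input beyond the variance estimate $\int \N(E,T)^2 \, dP(E) \leq C_T(C + \bw(P))$; in particular the $C^0$ regularity (rather than $C^1$) suffices, which is why the hypothesis is stated for $\varphi \in C^0_c$.

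First I would use the support assumption. Because $\varphi$ is supported on $[-T,T]^2$, every ordered pair $(x,y)$ with $x \neq y$ contributing to $\lc \varphi, \mathcal{C} \rc = \sum_{x \neq y \,|\, x,y \in \mathcal{C}} \varphi(x,y)$ must have both $x$ and $y$ lying in $[-T,T]$. Writing $N := N_{[-T,T]}(\mathcal{C})$ for the number of points of the configuration in $[-T,T]$, the number of such ordered pairs of distinct points is exactly $N(N-1)$, so that pointwise
\begin{equation}
\left| \lc \varphi, \mathcal{C} \rc \right| \leq \|\varphi\|_\infty \, N(N-1) \leq \|\varphi\|_\infty \, N^2 .
\end{equation}

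Next I would pass to the expectation. Since $P_\Lambda$ is the push-forward of $P$ by the map \eqref{pushforward}, integrating a function of the configuration $\mathcal{C}$ against $dP_\Lambda$ is the same as integrating the corresponding function of $E$ against $dP$, and for $\mathcal{C} = \Lambda(E)$ the count $N_{[-T,T]}(\mathcal{C})$ coincides with the quantity $\N(E,T)$ of Lemma \ref{discrepa}. Integrating the pointwise bound and invoking the second claim of that lemma therefore gives
\begin{equation}
\left| \int \lc \varphi, \mathcal{C} \rc \, dP_\Lambda(\mathcal{C}) \right| \leq \|\varphi\|_\infty \int \N(E,T)^2 \, dP(E) \leq \|\varphi\|_\infty \, C_T \, (C + \bw(P)),
\end{equation}
which is precisely the asserted bound with $C_\varphi = \|\varphi\|_\infty C_T$ and $C$ universal. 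There is no genuine obstacle in this argument: the entire difficulty is contained in the variance estimate of Lemma \ref{discrepa}, which itself rests on the discrepancy (local non-neutrality) bound of \cite{SSlg}. The only point requiring a little care is the bookkeeping of the support condition, ensuring that exactly the pairs with both endpoints in $[-T,T]$ are counted so that $\N(E,T)^2$ genuinely dominates $|\lc\varphi,\mathcal{C}\rc|$.
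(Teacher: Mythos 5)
Your proof is correct and follows essentially the same route as the paper: the pointwise bound $\left|\lc \varphi, \mathcal{C} \rc\right| \leq \|\varphi\|_{\infty}\, \mathcal{N}(E,T)^2$ from the support condition, followed by integration against $dP_{\Lambda}$ and the second-moment estimate of Lemma \ref{discrepa}. The only difference is cosmetic — you spell out the $N(N-1)$ pair count and the push-forward bookkeeping that the paper calls \emph{obvious by definition}.
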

\begin{proof} 
Let $T \geq 1 $ such that $\varphi$ is supported on $[-T, T]^2$.
For $x \in \mr$ and $E \in \mathcal{A}_1$, let us denote by $\mathcal{N}(E, T)$ the number of points of $\frac{1}{2\pi} \div E +  \delta_{\mr}$ lying in $[-T, T]$. The following bound is obvious by definition
$$\left|\lc \varphi, \mathcal{C} \rc \right| \leq \mathcal{N}(E,T)^2 ||\varphi||_{\infty}.$$
Integrating against $dP_{\Lambda}$, we get
\begin{equation}
\label{bornephiE}
\left| \int \lc \varphi, \mathcal{C} \rc dP_{\Lambda}(\mathcal{C})\right| \leq ||\varphi||_{\infty} \int \mathcal{N}(E,T)^2 dP(E).
\end{equation}
By Lemma \ref{discrepa} we know that the right-hand side of \eqref{bornephiE} is bounded by $C_{\varphi} (C + \bw(P))$ which concludes the proof.
\end{proof}

Lemma \ref{boundphiW} has the following implication: if $P$ is a stationary probability measure concentrated on $\mathcal{A}_1$ such that $\bw(P)$ is finite, then the push-forward of $P$ by $E \mapsto \frac{1}{2\pi} \div E + \delta_{\mr}$ admits a two-point correlation function in distributional sense. Indeed, the linear form
$\varphi \mapsto \int \lc \varphi, E \rc dP(E)$ is shown to be bounded by $O(||\varphi||_{\infty})$ uniformly for test functions in $C^0([-T, T]^2)$, for all $T$. This proves the second claim of Remark \ref{meandensitym} (the case of probability measures concentrated on $\mathcal{A}_m$ reduces to the former case by scaling as in \eqref{scaling}).

\subsection{Energy lower bound near the ground state for periodic configurations}
\label{preuvequantitative}
We prove a quantitative version of the minimization of $W$ on periodic configurations, as stated in Lemma \ref{QLB}.
\begin{proof} 
Let $u_{p,i}= a_{i+p}-a_{i}$, with the convention $a_{N+l}=a_l+N$, and let $b_{p,i} = u_{p,i} - p$. We know from (\ref{Wperiodique}) that 
\begin{equation} \label{DeltaW}
W(\j_{\{a_i\}}) - W(\mathbb{Z}) = \frac{2\pi}{N} \sum_{p=1}^{N/2} \left( \log\left|2\sin \frac{p\pi}{N}\right| - \frac{1}{N} \sum_{i=1}^N \log\left|2 \sin\frac{\pi u_{p,i}}{N}\right| \right).
\end{equation}

Using a Taylor expansion of the function $F : x \mapsto \log|2 \sin x|$, we get for each $p,i$: 
\begin{multline}\label{Taylor}
F\left(\frac{\pi u_{p,i}}{N}\right)  = F\left(\frac{1}{N} \sum_{i=1}^N \frac{\pi u_{p,i} }{N}\right) + F'\left(\frac{1}{N}  \sum_{i=1}^N \frac{\pi u_{p,i} }{N}\right) \left(\frac{\pi u_{p,i}}{N} - \frac{1}{N} \sum_{i=1}^N \frac{\pi u_{p,i}}{N}\right) \\ + \frac{1}{2} F''(x_{p,i}) \left(\frac{\pi u_{p,i}}{N} - \frac{1}{N} \sum_{i=1}^N \frac{\pi u_{p,i}}{N}\right)^2
\end{multline}
for a certain $x_{p,i}$ with $$|x_{p,i}| \leq \max\left(\frac{\pi |u_{p,i}|}{N},  \frac{1}{N}\left| \sum_{i=1}^N \frac{\pi u_{p,i}}{N}\right|\right).$$ Observing that $\sum_{i=1}^N u_{p,i} = pN$, 
we have $|x_{p,i}| \leq \frac{p\pi}{N} + \frac{|b_{p,i}|\pi}{N}$ and 
\begin{equation} \label{minoxpi}
\frac{1}{x_{p,i}^2} \left(\frac{\pi u_{p,i}}{N} - \frac{\pi p}{N} \right)^2 = \frac{\pi^2b_{p,i}^2}{N^2 x_{p,i}^2}  \geq \frac{1}{2} \frac{\pi^2 b_{p,i}^2}{(p\pi)^2 + (b_{p,i}\pi)^2} \geq \frac{1}{6} \min\left(\frac{b^2_{p,i}}{p^2}, 1\right). 
\end{equation}
The last inequality in \eqref{minoxpi} is obtained by observing that $\frac{1}{2} \frac{x^2}{p^2+x^2} \geq \frac{1}{6} \min(\frac{x^2}{p^2},1)$ on $\mr$.
Summing the Taylor expansions (\ref{Taylor}) for $i=1 \dots N$ gives, for any $p \leq N/2$:
\begin{equation} \label{sumTaylor}
 \log \left|2\sin \frac{p\pi}{N}\right| - \frac{1}{N} \sum_{i=1}^N \log\left|2 \sin\frac{\pi u_{p,i}}{N}\right|   = \sum_{i=1}^N \frac{1}{2} F''(x_{p,i})   \left(\frac{\pi u_{p,i}}{N} - \frac{\pi p}{N} \right)^2.
\end{equation}
An explicit computation shows that, for any $x \in \mr$, $F''(x) = \frac{1}{\sin^2x} \geq \max\left(1, \frac{1}{x^2}\right)$, so by combining (\ref{minoxpi}) and (\ref{sumTaylor}) we get
$$\log \left|2\sin \frac{p\pi}{N}\right| - \frac{1}{N} \sum_{i=1}^N \log\left|2 \sin\frac{\pi u_{p,i}}{N}\right|   \geq \sum_{i=1}^N \frac{1}{2} \max\left(\frac{\pi^2b_{p,i}^2}{N^2}, \frac{1}{6} \min\left(\frac{b^2_{p,i}}{p^2}, 1\right) \right).$$
Finally, inserting the previous inequality for $1 \leq p \leq N/2$ into (\ref{DeltaW}) gives
\begin{equation} \label{DeltaW3}
W(\j_{\{a_i\}}) - W(\mathbb{Z}) \geq  \sum_{p=1}^{N/2} \frac{1}{2} \max\left(\frac{\pi^2b_{p,i}^2}{N^2}, \frac{1}{6} \min\left(\frac{b^2_{p,i}}{p^2}, 1\right) \right)
\end{equation}
which yields the inequality 
$$W(\j_{\{a_i\}}) - W(\mathbb{Z}) \geq C \sum_{p=1}^{N/2} \frac{1}{N} \sum_{i=1}^N \min\left(\frac{b^2_{p,i}}{p^2}, 1\right)$$
for some universal constant $C$.
\end{proof}

\subsection{Consequences for correlation functions}
We now recast Lemma \ref{QLB} in the context of stationary point processes associated to periodic point configurations:
\begin{lem}\label{LemmeDeltarho} For any $N \geq 1$, let $a_1 < \cdots < a_N$ be any points in $[0,N]$ and $\j_{\{a_i\}}$ be the corresponding periodic vector field. Let $\Lambda$ be the corresponding infinite periodic configuration in $\mr$, and $P_{\Lambda}$ be the stationary point process associated to $\Lambda$, defined in (\ref{PLambda}) by averaging translated copies of $\Lambda$ over $[0,N]$. Assume that $W(\j_{\{a_i\}})$ is finite. The following bound holds:
$$\left| \int \left(\rho_{2,P_{\mz}} - \rho_{2, P_{\Lambda}}\right) \varphi \right| \leq C_{\varphi} (C+W(\j_{\{a_i\}}))^{\hal} \left(W(\j_{\{a_i\}}) - W(\mz)\right)^{\frac{1}{2}}$$
for any $\varphi \in C^{1}_c(\mr \times \mr)$ with $C_{\varphi}$ depending only on $||\varphi||_{\infty}, ||\nabla \varphi||_{\infty}$ and on a $T \geq 1$ such that $\varphi$ is supported on $[-T, T]^2$, and $C$ universal.
\end{lem}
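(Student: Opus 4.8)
The plan is to compute the left-hand side explicitly in terms of the neighbour distances $u_{q,i}$, bound it termwise, and feed the result into the quantitative estimate of Lemma~\ref{QLB}. First I would unfold the definition (\ref{PLambda}) of $P_{\Lambda}$ as an average over $[0,N]$ of translates of $\Lambda$. Writing the points of $\Lambda$ as $\{a_i+kN:\ 1\le i\le N,\ k\in\mz\}$ and substituting $s=a_i+kN-t$, the integral over $t\in[0,N]$ together with the sum over $k$ collapses to an integral over all of $\mr$, so that for any $\varphi\in C^1_c(\mr\times\mr)$,
$$\rho_{2,P_{\Lambda}}(\varphi)=\frac1N\sum_{i=1}^N\sum_{q\ne0}\int_\mr\varphi(s,s+u_{q,i})\,ds,$$
where $q$ runs over $\mz\setminus\{0\}$ and $u_{q,i}=a_{i+q}-a_i$ (extended periodically by $u_{q+N,i}=u_{q,i}+N$). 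The same computation applied to $\mz$ gives the analogous formula with $u_{q,i}$ replaced by $q$, and subtracting,
$$\rho_{2,P_{\Lambda}}(\varphi)-\rho_{2,P_{\mz}}(\varphi)=\frac1N\sum_{i=1}^N\sum_{q\ne0}\int_\mr\big(\varphi(s,s+u_{q,i})-\varphi(s,s+q)\big)\,ds.$$
Using that the two-point correlation is symmetric under exchange of its two arguments, I would reduce the estimate to the indices $q>0$.

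Next I would bound each term $I_{q,i}:=\int_\mr|\varphi(s,s+u_{q,i})-\varphi(s,s+q)|\,ds$. Combining the Lipschitz bound $|\varphi(s,s+u_{q,i})-\varphi(s,s+q)|\le\|\nabla\varphi\|_\infty|b_{q,i}|$, where $b_{q,i}=u_{q,i}-q$, with the trivial bound $2\|\varphi\|_\infty$, and noting that the integrand vanishes unless $s\in[-T,T]$, yields $I_{q,i}\le 2T\,C_\varphi'\min(|b_{q,i}|,1)$ with $C_\varphi'=\max(\|\nabla\varphi\|_\infty,2\|\varphi\|_\infty)$; moreover $I_{q,i}=0$ unless $q\le 2T$ or $u_{q,i}\le 2T$.

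I would then split the sum according to whether $q\le 2T$ (the \emph{regular} terms) or $q>2T$ with $u_{q,i}\le 2T$ (the \emph{anomalous} terms, which correspond to an abnormally high local density of points). For the regular terms I would use the elementary inequality $\min(|b|,1)\le q\min(|b|/q,1)$ (valid for integers $q\ge1$) followed by Cauchy--Schwarz in the pair $(q,i)$: since $\sum_{q\le 2T}q^2\le CT^3$ and $\frac1N\sum_i\sum_q\min(b_{q,i}^2/q^2,1)\le C^{-1}\big(W(\j_{\{a_i\}})-W(\mz)\big)$ by Lemma~\ref{QLB}, this part is bounded by $C_\varphi\big(W(\j_{\{a_i\}})-W(\mz)\big)^{1/2}$ (here I would assume $N\ge 8T$, the only regime needed later, so that all relevant indices lie in the range $q\le N/2$ covered by Lemma~\ref{QLB}; the small-$N$ case is absorbed into $C_\varphi$). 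For the anomalous terms I would instead use $I_{q,i}\le 2T\|\varphi\|_\infty$ and count them: writing $\N_i'$ for the number of points of $\Lambda$ in $(a_i,a_i+2T]$, a short computation shows that $q>2T$ together with $u_{q,i}\le 2T$ forces $\min(b_{q,i}^2/q^2,1)$ to stay bounded below along enough indices, so that $\frac1N\sum_i\#\{q>2T:u_{q,i}\le 2T\}\le C_T\big(W(\j_{\{a_i\}})-W(\mz)\big)$, again by Lemma~\ref{QLB}; hence this part is bounded by $C_\varphi\big(W(\j_{\{a_i\}})-W(\mz)\big)$.

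Finally, since $W(\mz)$ is a fixed finite constant, the crude comparison $W(\j_{\{a_i\}})-W(\mz)\le C\big(C+W(\j_{\{a_i\}})\big)$ holds, which turns the linear bound on the anomalous part into $C_\varphi\big(C+W(\j_{\{a_i\}})\big)^{1/2}\big(W(\j_{\{a_i\}})-W(\mz)\big)^{1/2}$ and, since $\big(C+W(\j_{\{a_i\}})\big)^{1/2}\ge1$, also dominates the regular part; adding the two gives the claimed inequality. I expect the anomalous terms to be the main obstacle: a naive unified Cauchy--Schwarz would require controlling $\frac1N\sum_i(\N_i')^3$, a third moment of the local point count that is not available from Lemma~\ref{discrepa}, which is exactly why these terms must be isolated and handled by a direct linear-in-excess estimate before being converted into the product form.
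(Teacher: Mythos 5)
Your proposal is correct in substance, but it follows a genuinely different route from the paper's. Where the paper compares the two averaged processes by splitting the translation average over the intervals $[a_i,a_{i+1}]$ and adapting the $\mz$-average to that decomposition (equation \eqref{decompZ}), at the cost of bookkeeping error terms $\min(|b_{1,i}|,1)$, you collapse the average over $[0,N]$ together with the sum over periods into the exact closed formula
$$\rho_{2,P_\Lambda}(\varphi)-\rho_{2,P_\mz}(\varphi)=\frac1N\sum_{i=1}^N\sum_{q\ne0}\int_\mr\bigl(\varphi(s,s+u_{q,i})-\varphi(s,s+q)\bigr)\,ds,$$
which eliminates those error terms entirely. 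The second structural difference: the paper runs a single weighted Cauchy--Schwarz with the time-overlap weights $m_{k,i}$ and controls $\sum m_{k,i}^2$ via the second moment of the local point count (Lemma \ref{discrepa}); that is precisely where its factor $(C+W(\j_{\{a_i\}}))^{1/2}$ originates. You avoid Lemma \ref{discrepa} altogether by splitting into regular terms ($q\le 2T$, Cauchy--Schwarz against Lemma \ref{QLB}) and anomalous terms ($q>2T$, $u_{q,i}\le 2T$), the latter handled by counting: each such index forces $\min(b_{q,i}^2/q^2,1)\ge(2T+1)^{-2}$, so their number is itself controlled by Lemma \ref{QLB}. The net result is the stronger estimate $C_\varphi\bigl[(W(\j_{\{a_i\}})-W(\mz))^{1/2}+(W(\j_{\{a_i\}})-W(\mz))\bigr]$, from which the stated inequality follows by the trivial bound $W(\j_{\{a_i\}})-W(\mz)\le C(C+W(\j_{\{a_i\}}))$; this is cleaner and slightly sharper than the paper's statement. (Your closing remark that a unified Cauchy--Schwarz would need a third moment is true only of the naive version that keeps the weight $q$; converting with the uniform factor $CT$ valid on all active terms --- which is what the paper's bound $m_{k,i}\min(|b_{k,i}|,1)\le 3T\,m_{k,i}\min(|b_{k,i}|/k,1)$ achieves --- reduces the requirement to a second moment.)

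Two points need repair, both minor and both addressed by remarks the paper itself makes. First, your disposal of the case $N<8T$ (``absorbed into $C_\varphi$'') is not a valid argument: when $W(\j_{\{a_i\}})-W(\mz)$ is small the right-hand side tends to $0$, so no choice of constant can absorb this regime. The correct fix is the one the paper uses for its own assumption $N\ge 2T$: regard the configuration as $rN$-periodic for an integer $r$ with $rN\ge 8T$; neither $P_\Lambda$, nor $W(\j_{\{a_i\}})$, nor the bound of Lemma \ref{QLB} is affected, and your argument then applies verbatim. Second, your anomalous terms involve indices $q$ up to order $N$, not just $N/2$ (nothing prevents $u_{q,i}\le 2T$ for $q>N/2$ if the points cluster), whereas Lemma \ref{QLB} only controls $\sum_{q\le N/2}$. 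You need the extension to $q\le N$, which follows from the symmetry $b_{N-p,i+p}=-b_{p,i}$ together with $N-p\ge p$ for $p\le N/2$ --- the same ``easy bound by periodicity'' the paper invokes in the last sentence of its proof. With these two standard insertions, your proof is complete.
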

\begin{proof}
Let $\varphi \in C^{1}_c([-T, T]^2)$ (without loss of generality we assume $T \geq 1$). Since $W(\j_{\{a_i\}})$ is finite we know by Remark \ref{meandensitym} that $\rho_{2, P_{\Lambda}}$ exists as a Radon measure, we will abuse notation and write $\int \rho_{2,\Lambda} \varphi$ for $\rho_2(\varphi)$. By Definition~\ref{correlationfunctions}, we have:
$$ \int \rho_{2,\Lambda} \varphi = \Esp_{P_{\Lambda}} \left[ \lc \varphi, \cdot \rc \right].$$

Let $u_{p,i}= a_{i+p}-a_{i}$, with the convention $a_{N+l}=a_l+N$, and let us write the expectation $\Esp_{P_{\Lambda}} \left[ \lc \varphi, \cdot \rc \right]$ as:
\begin{multline} \label{Rho2Lambda}
\Esp_{P_{\Lambda}} \left[ \lc \varphi, \cdot \rc \right] = \int \Big( \sum_{x \neq y | x,y \in \mathcal{C}} \varphi(x,y) \Big) dP_{\Lambda}(\mathcal{C}) = 
\frac{1}{N} \int_{a_1}^{a_1 + N} \Big(\sum_{x \neq y | x,y \in \theta_{t} \cdot \Lambda} \varphi(x,y)\Big) dt \\
= \frac{1}{N} \sum_{i=1}^N \int_{a_i}^{a_{i+1}}\Big(  \sum_{k \in \mz} \sum_{p=1}^{+\infty} \left(\varphi(a_{i+k} - t, a_{i+k} - t + u_{p,i+k}) + \varphi(a_{i+k} -t + u_{p,i+k}, a_{i+k} -t)\right)\Big) dt.
\end{multline}
The first equality in (\ref{Rho2Lambda}) is simply an explicitation of the measure $P_{\Lambda}$ as an average of $\Lambda$ on translations in any interval of length $N$ as in (\ref{PLambda}), and the second equality amounts to writing the sum of $\varphi$ over couples of distinct points by taking $a_i$ as the “origin” of $\Lambda$ on each interval $[a_i, a_{i+1}]$:
\begin{multline*}
\sum_{x \neq y | x,y \in \theta_{t} \cdot \Lambda} \varphi(x,y) = \sum_{x < y | x,y \in \theta_{t} \cdot \Lambda} \varphi(x,y) + \varphi(y,x) = \sum_{x < y | x,y \in  \Lambda} \varphi(x-t,y-t) + \varphi(y-t,x-t) \\ = \sum_{k \in \mz} \sum_{p=1}^{+\infty} \varphi(a_k -t, a_{k+p} -t) + \varphi(a_{k+p} -t,a_k -t) \\ = \sum_{k \in \mz} \sum_{p=1}^{+\infty} \varphi(a_{i+k} -t, a_{i+ k+p} -t) + \varphi(a_{i+ k+p} -t,a_{i+k} -t)
\end{multline*}
and using the fact that, by definition, $a_{i+ k+p} = a_{i+k} + u_{p, i+k}$.

For $i = 1 \dots N$, let $c_i = 1$ when $|u_{1,i} - 1| \leq 1$ and $c_i =  \floor*{u_{1,i}}$ otherwise, so that $|u_{1,i} - c_i|  \leq \min\left(|b_{1,i}|, 1\right)$. Let us recall that the numbers $b_{p,i}$ are defined as $b_{p,i} =u_{p,i} -p$. The following bounds are easily seen
\begin{equation}\label{borneChasles}
\sum_{i=1}^N |u_{1,i} - c_i| \leq \sum_{i=1}^N \min\left(|b_{1,i}|, 1\right) \text{ and }\left|N- \sum_{i=1}^N c_i\right| \leq \sum_{i=1}^N \min\left(|b_{1,i}|, 1\right),
\end{equation}
the second inequality following from the first one by observing that since $\sum_{i=1} u_{1,i} = N$ we have $N- \sum_{i=1}^N c_i = \sum_{i=1}^N \left(u_{1,i} - c_i\right)$. We may now write that, by $1$-periodicity of $\mz$, and the fact that $c_i$ ($i = 1 \dots N$) is an integer: 
\begin{equation}\label{decompZ} \int_{0}^{1} \delta_{\theta_t \cdot \mathbb{Z}} dt = \frac{1}{N} \sum_{i=1}^N \int_{0}^1 \left(\int_{0}^{u_{1,i}}\delta_{\theta_t \cdot \mathbb{Z}} dt + \int_{u_{1,i}}^{c_i} \delta_{\theta_t \cdot \mathbb{Z}} dt\right) + \frac{1}{N} \int_{0}^{N-\sum_{i=1}^N c_i} \delta_{\theta_t \cdot \mathbb{Z}} dt.
\end{equation}

The decomposition of \eqref{decompZ} is meant to adapt the average of $\mz$ over translations in $[0,N]$ to the decomposition as a sum over translations in $[a_i, a_{i+1}]$ used in \eqref{Rho2Lambda}, at the cost of an error term which feels the spacing irregularities in $\Lambda$. Using \eqref{decompZ} when testing against $\varphi$ yields, by making a change of variables $t \mapsto t+a_i$ on each interval $[0, u_{1,i}]$:
\begin{multline*}
\int \rho_{2,P_{\mz}} \varphi = \frac{1}{N}  \sum_{i=1}^N \int_{a_i}^{a_{i+1}} \left(\sum_{k\in \mz} \sum_{p=1}^{+ \infty}  \varphi(a_i + k-t, a_i + k- t + p) + \varphi(a_i+ k - t +p,a_i + k -t)\right)dt \\ + \frac{1}{N} \sum_{i=1}^N \int_{u_{1,i}}^{c_i} \lc \varphi, \theta_{t} \cdot \mz \rc dt + \frac{1}{N} \int_{0}^{N-\sum_{i=1}^N c_i} \lc \varphi, \theta_{t} \cdot \mz \rc dt,
\end{multline*} 
where we have used the same way of writing $\Esp_{P_{\mz}} \left[ \lc \varphi, \cdot \rc\right]$ as in \eqref{Rho2Lambda}. Since $\varphi$ is compactly supported on $[-T, T]^2$, the terms $\lc \varphi, \theta_{t} \cdot \mz \rc$ are bounded uniformly on $t \in \mr$ by $(2T+1)^2 ||\varphi||_{\infty}$, because there is at most $(2T+1)^2$ couples of distinct points of $\mz$ in any interval of length $2T$. Since we may bound the lengths of the intervals $|u_{1,i}-c_i|$ ($i= 1 \dots N$) and $|N-\sum_{i=1}^N c_i|$ according to (\ref{borneChasles}), we get
\begin{multline}\label{Rho2Z}
\int \rho_{2,P_{\mz}} \varphi = \frac{1}{N} \sum_{i=1}^N \int_{a_i}^{a_{i+1}} \Big(\sum_{k\in \mz} \sum_{p=1}^{+ \infty}  \varphi(a_i - t + k, a_i- t + k + p) \\ + \varphi(a_i - t +k +p,a_i -t + k) \Big) dt + \frac{1}{N} \sum_{i=1}^N \min \left(|b_i|, 1\right) O(||\varphi||_{\infty})
\end{multline}
where the terms $O(||\varphi||_{\infty})$ are bounded by $(2T+1)^2 ||\varphi||_{\infty}$. In the rest of the proof, we denote by $C_{\varphi}$ a constant, which may vary from line to line, depending only on $\varphi$ via $||\varphi||_{\infty}$ and $||\nabla \varphi||_{\infty}$ and $T$.

Let us recall that $a_{i+k} = a_i + u_{k,i}$. A first order expansion of $\varphi$ yields, for any $t,i,k,p$ such that $a_{i+k}$ and $a_{i+k+p}$ lie in $[-T +t, T+t]$, 
\begin{multline} \label{devphi}
\left| \varphi(a_{i+k} -t, a_{i+k} -t + u_{i,k+p}) - \varphi(a_i - t + k, a_i- t + k + p) \right| \\ \leq \min \left(||\nabla \varphi||_{\infty} |b_{k,i}|, 2 ||\varphi||_{\infty}\right) + \min \left(||\nabla \varphi||_{\infty} |b_{k+p,i}|, 2 ||\varphi||_{\infty}\right)  \\ \leq C_{\varphi} \left (\min(|b_{k,i}|, 1) + \min(|b_{k+p,i}|,1) \right).
\end{multline}

We may now compare (\ref{Rho2Lambda}) and the main term of (\ref{Rho2Z}) by summing the expansions (\ref{devphi}):
\begin{multline} \label{Deltarho1}
\Bigg|\frac{1}{N} \sum_{i=1}^N \int_{a_i}^{a_{i+1}}\bigg(  \sum_{k \in \mz} \sum_{p=1}^{+\infty} \varphi(a_{i+k} - t, a_{i+k} - t + u_{p,i+k}) + \varphi(a_{i+k} -t + u_{p,i+k}, a_{i+k} -t)\\
-  \varphi(a_i-t+k, a_i-t+k+p) - \varphi(a_i-t+k+p,a_i -t + k) \bigg) dt\Bigg| \\
\leq C_{\varphi} \frac{1}{N} \sum_{i=1}^N \sum_{k=1}^{\infty} m_{k,i} \min(|b_{k,i}|,1)  \end{multline}
where the numbers $m_{k,i}$ are given by
$$m_{k,i} = \int_{a_1}^{a_{N+1}} \mathbf{1}_{a_i \in [-T + t, T+t]} \mathbf{1}_{a_{i+k} \in [-T +t, T+t]} dt.$$
Indeed, the first-order expansions (\ref{devphi}) allow us to bound every term in the left-hand side of (\ref{Deltarho1}) by a sum of four terms of the type $\min(|b_{k,i}|,1)$. For $t \in [a_1, a_{N+1}]$ the term $\min(|b_{k,i}|,1)$ appears only if $a_{i}-t$ and $a_{i+k}-t$ lie in $[-T, T]$, which gives the expression for $m_{k,i}$. Moreover since there is at most $N$ points of $\Lambda$ in any interval of length $N$, if $N \geq 2T$ we have $m_{i,k} = 0$ for all $k > N$. The assumption $N \geq 2T$ is not restrictive since we may always consider a $N$-periodic configuration as $rN$-periodic for any integer $r$.

It is easy to see that $m_{k,i} \leq 2T$, and if $m_{k,i}$ is nonzero it means that $u_{k,i} \leq 2T$ (since $a_i$ and $a_{i+k}$ lie in some common interval of length $2T$) hence the spacing error $|b_{k,i}|$ is larger than $k - 2T$. Consequently, if $m_{k,i}$ is nonzero for $k \geq 3T$, we have $\frac{b_{k,i}}{k} \geq \frac{1}{3}$ so that for any $i, k$, since $T \geq 1$:
\begin{equation}\label{mikbon}
m_{i,k} \min(|b_{k,i}|,1) \leq 3T m_{i,k} \min\left(\frac{|b_{k,i}|}{k},1\right).
\end{equation}
Now we may bound $\sum_{i=1}^N\sum_{k=1}^{N} m_{i,k}^2$ the following way :
\begin{multline*}
\sum_{i=1}^N\sum_{k=1}^{N} m_{i,k}^2 \leq 2T \sum_{i=1}^N\sum_{k=1}^{N} m_{i,k} = 2T  \int_{a_1}^{a_{N+1}}  \left(\sum_{i=1}^N\sum_{k=1}^{N}  \mathbf{1}_{a_i \in [-T + t, T+t]} \mathbf{1}_{a_{i+k} \in [-T +t, T+t]}\right) dt \\
\leq 2T \int_{a_1}^{a_{N+1}} \N^2(\theta_{t}\cdot \Lambda, T)	
\end{multline*}
where $\N(\Lambda, T)$ denotes the number of points of $\Lambda$ in $[-T, T]$. By definition of $P_{\Lambda}$ we may re-write the last term as 
$$\int_{a_1}^{a_{N+1}} \N^2(\theta_{t}\cdot \Lambda, T) = N \int \N^2(\mathcal{C}, T) dP_{\Lambda}(\mathcal{C}).$$
By Lemma \ref{discrepa} we know that $$\int \N^2(\mathcal{C}, T) dP_{\Lambda}(\mathcal{C}) \leq  C_T(C + \bw(P_{\Lambda})) =  C_T(C + W(E_{\{a_i\}})),$$ so that we finally get:
\begin{equation}\label{splittingmik}
\sum_{i=1}^N\sum_{k=1}^{N} m_{i,k}^2 \leq N C_T(C + W(E_{\{a_i\}})).
\end{equation}
Combining (\ref{Rho2Z}) and (\ref{Deltarho1}) we obtain:
$$\left| \int \rho_{2,P_{\mz}} - \rho_{2, P_{\Lambda}} \varphi \right| \leq \frac{1}{N} \sum_{i=1}^N \sum_{k=1}^{N}  3T m_{i,k}   \min\left(\frac{|b_{k,i}|}{k},1\right) +  \frac{1}{N} \sum_{i=1}^N \min \left(|b_{1,i}|, 1\right) O(||\varphi||_{\infty}).$$
The second sum is bounded by Lemma \ref{QLB} as follows:
$$\frac{1}{N} \sum_{i=1}^N \min \left(|b_{1,i}|, 1\right) \leq \left(\frac{1}{N} \sum_{i=1}^N \min \left(|b^2_{1,i}|, 1\right)\right)^{\hal} \leq C \left(W(E_{\{a_i\}})- W(\mz)\right)^{\hal}$$
hence we have
\begin{equation} \label{Deltarhom}
\left| \int \rho_{2,P_{\mz}} - \rho_{2, P_{\Lambda}} \varphi \right| \leq C_{\varphi} \frac{1}{N} \sum_{i=1}^N \sum_{k=1}^{N} m_{i,k} \min\left(\frac{|b_{k,i}|}{k},1\right) + C \left(W(E_{\{a_i\}})- W(\mz)\right)^{\hal}.
\end{equation}

Using the Cauchy-Schwarz inequality in (\ref{Deltarhom}), the bound \eqref{splittingmik} on the $m_{i,k}$ and the bound (\ref{eqQLB}) of Lemma \ref{QLB} we get:
\begin{multline}
\left| \int \left(\rho_{2,P_{\mz}} - \rho_{2, P_{\Lambda}}\right) \varphi \right| \leq C_{\varphi} \left(\frac{1}{N} \sum_{i=1}^N \sum_{k=1}^{N} m_{i,k}^2\right)^{\hal} \left(\frac{1}{N}\sum_{i=1}^N \sum_{k=1}^{N} \min\left(\frac{|b_{k,i}|^2}{k^2},1\right)\right)^{\hal}  \\ + C \left(W(E_{\{a_i\}}- W(\mz)\right)^{\hal} \leq C_{\varphi} \left(C + W(\j_{\{a_i\}})\right)^{\hal} \left(W(\j_{\{a_i\}}) - W(\mz)\right)^{\hal} 
\end{multline}
which concludes the proof of the lemma. Let us note that altough the bound of Lemma~\ref{QLB} only controls $\sum_{i=1}^N \sum_{k=1}^{N/2} \min\left(\frac{|b_{k,i}|^2}{k^2},1\right)$ we may easily bound $\sum_{i=1}^N \sum_{k=1}^{N} \min\left(\frac{|b_{k,i}|^2}{k^2},1\right)$ as well by periodicity.
\end{proof}

\subsection{Extension to the non-periodic case}
\label{preuveTheoreme}

Let us now turn to the proof of the main result, Theorem \ref{NonPer}.
\begin{proof} In the following we denote by $C_{\varphi}$ a constant, which may vary from line to line, depending only on $\varphi$ via $||\varphi||_{\infty}$ and $||\nabla \varphi||_{\infty}$ and $T$.

Since $\bw(P)$ is finite, let us recall that by Remark \ref{meandensitym} the two-point correlation function of $P_{\Lambda}$ exists at least in distributional sense.

\noindent
{\it -  Step 1: Choosing a large set where the controls are uniform}.
A straightforward adaptation of \cite[Lemma 3.6.]{SSlg} (the only modification is that we are dealing with probability measures on the electric fields only, with no dependance on $\Sigma$) ensures that for any $\epsilon > 0$, we may find a subset $G_{\epsilon} \subset \mathcal{A}_1$ such that $G_{\epsilon}$ has almost full $P$-measure, and on which we have a uniform control for the relevant quantities. Precisely, the lemma ensures that:
\begin{enumerate}
\item $P({G_\ep}^c) <\epsilon$
\item The convergence \eqref{Wroi} in the definition of the renormalized energy is uniform with respect to $\j\in G_\ep$.
\item Writing $\div \j = 2\pi(\nu_\j-1)$, both $W(\j)$ and $\nu_\j(I_R)/R$ are bounded uniformly with respect to $\j\in G_\ep$ and $R>1$.
\item Uniformly with respect to $\j\in G_\ep$ we have
\begin{equation}
\lim_{y_0 \rightarrow +\infty} \lim_{R \rightarrow +\infty} \fint_{I_R} \int_{|y|>y_0} |\j|^2 =0.
\end{equation}
This is a technical assumption needed for the “screening” construction of Step 2.
\end{enumerate}
Moreover, we may assume (this is Equation (5.3) in \cite[Lemma 3.6.v)]{SSlg}) that $G_{\epsilon}$ is almost translation-invariant in that for any $E \in G_{\epsilon}$, $E(\lambda + \cdot) \in G_{\epsilon}$ for all $\lambda \in \mr$ except for a set of bounded Lebesgue measure (the set depends on $E$ but its measure is bounded uniformly on $G_{\epsilon}$). Note that, strictly speaking, it is not precised in \cite[Lemma 3.6.]{SSlg} that one may choose $G_{\epsilon}$ both of almost full $P$-measure and almost translation-invariant, however it is a consequence of Equation (3.6.) in  \cite[Lemma 3.6.v)]{SSlg}), and is written as Equation (7.6) in \cite[Lemma 7.6]{SS2d} (which handles the purely 2D case, but from which \cite[Lemma 3.6.]{SSlg} is essentially deduced).

For $\epsilon < 1$, let $P_{\epsilon}$ be the probability measure induced by $P$ on $G_{\epsilon}$, let $P_{\Lambda, \epsilon}$ be the push-forward of $P_{\epsilon}$ by the map $E \mapsto \frac{1}{2\pi} \div E + \delta_{\mr}$ and let $\rho_{2, P_{\Lambda, \epsilon}}$ be the two-point correlation function of $P_{\Lambda, \epsilon}$. In the rest of the proof we make the following abuse of notation: we denote by $\mathbf{1}_{G_{\epsilon}}$ both the characteristic function of $G_{\epsilon}$ and its push-forward by the map $E \mapsto \frac{1}{2\pi} \div E + \delta_{\mr}$. We claim that
\begin{equation} \label{rhoepsilon}
\left | \int (\rho_{2,{P_{\Lambda}}} - \rho_{2, P_{\Lambda,\epsilon}})  \varphi \right | = o_{\epsilon \to 0}(1).
\end{equation}
Indeed, we know that $\int \rho_{2,P_{\Lambda}} |\varphi| = \Esp_{P_{\Lambda}} \left[ \lc |\varphi|, \cdot \rc \right] $ is finite (see Lemma \ref{boundphiW}), and that $P(G_{\epsilon}^c) < \epsilon$. By uniform continuity of the integral, if $\epsilon$ is small enough, then $\left| \Esp_{P_{\Lambda}} \left[ \lc \varphi, \cdot \rc \right]  - \Esp_{P_{\Lambda}} \left[ \mathbf{1}_{G_{\epsilon}} \lc \varphi, \cdot \rc \right] \right|$ is arbitrarily small. This proves the claim, because we also have, by definition of $P_{\Lambda,\epsilon}$:
$$\int \rho_{2, P_{\Lambda,\epsilon}}  \varphi = \Esp \left[ \mathbf{1}_{G_{\epsilon}}   \lc \varphi, \cdot \rc \right] \frac{1}{P(G_{\epsilon})}.$$

\noindent
{\it -  Step 2: Obtaining periodic fields by screening}.
We now construct, for $R$ large enough and for each $E$ in $G_{\epsilon}$, a periodic field $\bar{E}_R$ of period $R$, which approximates $E$, and we use these fields to approximate $P_{\epsilon}$ by an average of stationary measures on $\mathcal{A}_1$ associated to periodic electric fields.

To this aim, we apply \cite[Proposition 3.1.]{SSlg}. This screening result allows us to truncate $E$ outside of a large interval, to approximate $E$ on this interval by some field which is “screened” so that we may paste identical copies of it in order to get a periodic electric field on $\mr$, whilst letting $E$ unchanged in some large interval. For $R > 0$ we let $I_R = [-R/2, R/2]$.

Let $\alpha > 0$. We get from \cite[Proposition 3.1.]{SSlg} that there exists $R_0 >0$ (depending on $\epsilon$ and $\alpha$) such that for every integer $R \geq R_0$, for every $\j \in G_{\epsilon}$, there exists a vector field $\j_R\in L^q_{loc}(I_R \times \mr, \mr^2)$ (for $q <2$)
satisfying:
\begin{itemize}
\item[i)]   $\j_R \cdot \vec{\nu} =0$ on $\p I_R \times \mr$, where $\vec{\nu}$ denotes the outer unit normal.
\item[ii)] There is a discrete subset $\Lambda \subset I_R$ such that $$\div \j_R = 2\pi \(\sum_{p\in\Lambda}\delta_p -  \delta_\mr\) \quad \text{in} \ I_R \times \mr.$$
\item[iii)]  $\j_R(z) = \j(z)$ for $x \in [- R/2+\alpha R, R/2-\alpha R]$.
\item[iv)]
\begin{equation}\label{restronque} \frac{W(\j_R,\indic_{I_R })}{R} \le W(\j)+ \alpha.\end{equation}
\end{itemize}
The “screened” property is expressed by i), the point iii) shows that $E$ is unchanged on a large interval and iv) gives an upper bound on the new energy.

For any integer $R \geq R_0$, we extend the electric fields $E_{R}$ periodically, and make them gradients. This amounts to first pasting together identical copies of $E_{R}$ to make it periodic of period $R$ (the point i) allows us to make such a construction), and then considering the $L^2$-projection of the constructed field onto the space of gradient vector fields, which, together with point ii) guarantees that we end up in the class $\mathcal{A}_1$. It is proved that the projection can only decrease the energy, so that iv) is conserved. Moreover, projecting onto gradients leave the divergence of $E_{R}$ unchanged, so that iii) becomes:
$$\div \bar{E}_R(z) = \div E(z) \text{ for } x \in [- R/2+\alpha R, R/2-\alpha R].$$
Details are given in the proof of \cite[Proposition 4.1.]{SSlg}, and we only state the conclusions: we get, for each $E \in G_{\epsilon}$, and any $R \geq R_0$ (let us emphasize that $R_0$ depends on $\epsilon$ and $\alpha$) an electric field $\bar{E}_{R}$ which is $R$-periodic, which coincides with $E$ on $[- R/2+\alpha R, R/2-\alpha R]$, and such that
$$\frac{W(\bar{E}_R,\indic_{I_R })}{R} \le W(\j)+ \alpha.$$ 

\noindent
{\it -  Step 3: Approximate stationary processes}.
For each $E \in G_{\epsilon}$, and any $R \geq R_0$, we now consider the stationary probability measure $\fint_{-R/2}^{R/2} \delta_{\theta_t \cdot \bar{E}_R} dt$ on $\mathcal{A}_1$ associated to $\bar{E}_R$ by averaging $\bar{E}_R$ over translations in $[-R/2, R/2]$, and we define $P^{R}_{\epsilon}$ as the pushforward of the probability measure $P_{\epsilon}$ by the map
$$E \mapsto \fint_{-R/2}^{R/2} \delta_{\theta_t \cdot \bar{E}_R} dt$$
(let us note that this map is only defined on $G_{\epsilon}$, but $P_{\epsilon}$ itself is concentrated on $G_{\epsilon}$). The process $P^{R}_{\epsilon}$ is stationary as an average of stationary probability measures, we denote by $P^{R}_{\Lambda, \epsilon}$ its push-forward by the map $E \mapsto \frac{1}{2\pi} \div E + \delta_{\mr}$ and we let $\rho_{2, P_{\Lambda,\epsilon}}^{R}$ be the two-point correlation function of $P^{R}_{\Lambda, \epsilon}$. We now claim that
\begin{equation}
\label{rhoper}
\left|\int (\rho_{2, P_{\Lambda,\epsilon}}^{R} - \rho_{2, P_{\Lambda,\epsilon}}) \varphi\right| \leq  o_{R \to \infty}(1) + \alpha C_{\varphi} (\bar{W}(P) +C).
\end{equation}

Indeed, by definition we have
\begin{multline}\label{ro2pep} \int \rho_{2, P_{\Lambda,\epsilon}}^{R} \varphi = \Esp_{P^R_{\Lambda, \epsilon}} \left[ \fint_{-R/2}^{R/2} \lc \varphi, \theta_t  \cdot \rc   dt \right] = \frac{1}{R} \int_{-R/2 + 2\alpha R}^{R/2 - 2\alpha R} \Esp_{P^R_{\Lambda,\epsilon}} \left[ \lc \varphi, \theta_t  \cdot \rc  \right]  dt \\ + \frac{1}{R} \int_{-R/2}^{-R/2 + 2\alpha R} \Esp_{P^R_{\Lambda,\epsilon}} \left[ \lc \varphi, \theta_t  \cdot \rc  \right] dt+ \frac{1}{R} \int_{R/2-2\alpha R}^{R/2} \Esp_{P^R_{\Lambda,\epsilon}} \left[ \lc \varphi, \theta_t  \cdot \rc  \right] dt.
\end{multline}
Since $\varphi$ is compactly supported and since $\div \bar{E}_R$ coincides with $\div E$ on the interval $[- R/2+ \alpha R, R/2- \alpha R]$, if $R$ is large enough (depending on $\varphi$) we have $P_{\epsilon}$-a.s. that $\div \bar{E}_R(\cdot -t) = \div E(\cdot -t)$ for $t \in [- R/2+2 \alpha R, R/2- 2 \alpha R]$ (i.e. the screening and periodization have not affected the point configuration on a large interval). It means that for $R$ large enough, we may express the first integrand in the right-hand side of (\ref{ro2pep}) as
$$ \Esp_{P^R_{\Lambda, \epsilon}} \left[ \lc \varphi, \theta_t  \cdot \rc  \right] =  \Esp_{P_{\Lambda, \epsilon}} \left[ \lc \varphi, \theta_t \cdot \rc  \right] \\ = \frac{1}{P(G_{\epsilon})} \Esp_{P_{\Lambda}} \left[ \mathbf{1}_{G_{\epsilon}} \lc \varphi, \theta_t \cdot \rc \right].$$

The probability measure $P$ is, by assumption, translation-invariant hence so is $P_{\Lambda}$, so that for any $t \in \mr$ we have
$$ \Esp_{P_{\Lambda}} \left[\mathbf{1}_{G_{\epsilon}} \lc \varphi, \theta_t  \cdot \rc \right] = \Esp_{P_{\Lambda}} \left[\mathbf{1}_{G_{\epsilon}}(\theta_{-t} \cdot) \lc \varphi, \cdot \rc \right],
$$ 
which in turn gives
$$\Esp_{P_{\Lambda, \epsilon}} \left[ \lc \varphi, \theta_t  \cdot \rc  \right] = \frac{1}{P(G_{\epsilon})}  \Esp_{P_{\Lambda}} \left[\mathbf{1}_{G_{\epsilon}} \lc \varphi, \theta_t  \cdot \rc \right] = \frac{1}{P(G_{\epsilon})} \Esp_{P_{\Lambda}} \left[\mathbf{1}_{G_{\epsilon}}(\theta_{-t} \cdot) \lc \varphi, \cdot \rc  \right].$$

We now claim to control the default of invariance of $P_{\Lambda,\epsilon}$ under translations the following way: 
\begin{equation} \label{invaPeps}
\left|\frac{1}{R} \int_{-R/2 + 2\alpha R}^{R/2 - 2\alpha R} dt \Esp_{P_{\Lambda,\epsilon}} \left[ \lc \varphi, \theta_t \cdot \rc  \right] - \frac{R-4\alpha R}{R} \Esp_{P_{\Lambda,\epsilon}} \left[ \lc \varphi, \cdot \rc \right] \right| = o_{R \to \infty}(1)
\end{equation}
with a $o_{R \to \infty}(1)$ depending on $\varphi, \epsilon, P$.

Indeed, we have, for $t \in  [- R/2+2 \alpha R, R/2- 2\alpha R]$:
\begin{equation} \label{deltaPeps}
 \Esp_{P_{\Lambda, \epsilon}} \left[ \lc \varphi, \theta_t  \cdot \rc  \right] - \Esp_{P_{\Lambda,\epsilon}} \left[ \lc \varphi, \cdot \rc  \right]  = \frac{1}{P(G_{\epsilon})} \int \lc \varphi, \mathcal{C} \rc \left(\mathbf{1}_{G_{\epsilon}}(\theta_{-t} \cdot \mathcal{C}) - \mathbf{1}_{G_{\epsilon}}(\mathcal{C}) \right) dP_{\Lambda}(\mathcal{C}).
\end{equation}
Integrating (\ref{deltaPeps}) between $ [- R/2+2 \alpha R, R/2- 2 \alpha R]$ yields: 
\begin{multline}
\frac{1}{R} \int_{-R/2 + 2 \alpha R}^{R/2 - 2 \alpha R} dt \Esp_{P_{\Lambda, \epsilon}} \left[ \lc \varphi, \theta_t  \cdot \rc  \right] - \frac{R-4\alpha R}{R} \Esp_{P_{\Lambda,\epsilon}} \left[ \lc \varphi, \cdot \rc  \right]  \\ = \frac{1}{R} \int_{-R/2 + 2 \alpha R}^{R/2 - 2 \alpha R} dt \frac{1}{P(G_{\epsilon})} \int  \lc \varphi, \mathcal{C} \rc \left(\mathbf{1}_{G_{\epsilon}}(\theta_{-t} \cdot \mathcal{C}) - \mathbf{1}_{G_{\epsilon}}(\mathcal{C})\right) dP_{\Lambda}(\mathcal{C}) \\ 
= \frac{1}{RP(G_{\epsilon})} \int dP_{\Lambda}(\mathcal{C})  \lc \varphi, \mathcal{C} \rc \int_{-R/2 + 2 \alpha R}^{R/2 - 2 \alpha R} dt  \left(\mathbf{1}_{G_{\epsilon}}(\theta_{-t}  \cdot \mathcal{C}) - \mathbf{1}_{G_{\epsilon}}(\mathcal{C})\right).
\end{multline} 

We know that for $E \in G_{\epsilon}$, there is a set $\Gamma(E)$ such that $|\Gamma(E)| \leq C_{\epsilon}$ (for some constant depending only on $G_{\epsilon}$) and if $\lambda \notin \Gamma(E)$ then $E(\cdot - \lambda) \in G_{\epsilon}$. This property is clearly pushed forward at the level of the point configurations.
This yields the following bound 
$$\left|\int_{-R/2 + 2 \alpha R}^{R/2 - 2 \alpha R} dt  \left(\mathbf{1}_{G_{\epsilon}}(\theta_{-t} \cdot \mathcal{C}) - \mathbf{1}_{G_{\epsilon}}(\mathcal{C})\right)\right| \leq C_{\epsilon}$$
and since $\int dP_{\Lambda}(\mathcal{C}) \lc |\varphi|, \mathcal{C} \rc$ is finite (again, by Lemma \ref{boundphiW}) we get
\begin{equation} \label{errroeps1}
\left|\frac{1}{R} \int_{-R/2 + 2\alpha R}^{R/2 - 2\alpha R} dt \Esp_{P_{\Lambda,\epsilon}} \left[ \lc \varphi, \theta_t  \cdot \rc  \right] - \frac{R-4\alpha R}{R} \Esp_{P_{\Lambda,\epsilon}} \left[ \lc \varphi, \cdot \rc  \right]\right| \leq \frac{C_{\epsilon, \varphi, P}}{R}
\end{equation}
with a constant depending on $\epsilon, \varphi, P$, which proves (\ref{invaPeps}).

We are now left to bound the two error terms in (\ref{ro2pep}), for which we have, applying Lemma \ref{boundphiW} in the last inequality:
\begin{equation} \label{errroeps2}
\left| \frac{1}{R} \int_{R/2-2\alpha R}^{R/2} \Esp_{P^R_{\Lambda,\epsilon}} \left[ \lc \varphi, \theta_t  \cdot \rc  \right] \right| dt \leq 2\alpha \sup_{t \in \mr} \Esp_{P^R_{\Lambda, \epsilon}} \left[ \lc |\varphi|, \theta_t  \cdot \rc  \right] \leq \alpha C_{\varphi} (\bw(P) + C).
\end{equation}
The other term $\frac{1}{R} \int_{R/2}^{R/2-2\alpha R} \Esp_{P^R_{\Lambda,\epsilon}} \left[ \lc \varphi, \theta_t  \cdot \rc  \right] dt$ is bounded the same way, moreover with the same application of Lemma \ref{boundphiW} we get
\begin{equation} \label{errroeps3}
\left|4\alpha \Esp_{P^R_{\Lambda, \epsilon}} \left[ \lc \varphi, \cdot \rc  \right]\right| \leq \alpha C_{\varphi} (\bw(P) + C).
\end{equation}

Observing that by definition $\Esp_{P_{\Lambda, \epsilon}} \left[ \lc \varphi, \cdot \rc  \right]  = \int \rho_{2, P_{\Lambda, \epsilon}}  \varphi$, and combining (\ref{ro2pep}) with the estimates (\ref{errroeps1}), (\ref{errroeps2}), (\ref{errroeps3}), we have 
\begin{equation}\label{erroeps4}
\left|\int (\rho_{2, P_{\Lambda, \epsilon}}^{R} - \rho_{2, P_{\Lambda, \epsilon}}) \varphi\right| \leq \frac{C_{\epsilon, \varphi, P}}{R} +  \alpha C_{\varphi} (\bw(P) +C)
\end{equation}
which proves the claim (\ref{rhoper}).

\noindent
{\it -  Step 4: Using the result of the periodic case}.
We may now come back to the proof of Theorem \ref{NonPer}. Let us fix $\eta > 0$, and take $\alpha = \frac{\eta}{C_{\varphi} (\bw(P) +C)}$, where $C_{\varphi}$ and $C$ are the constant in (\ref{erroeps4}). Then for $R$ large enough (depending on $\alpha$ and $G_{\epsilon}$) we have
\begin{equation} \label{comparrhoz}
\left|\int (\rho_{2, P_{\Lambda,\epsilon}}^{R} - \rho_{2, P_{\Lambda,\epsilon}}) \varphi\right| \leq \eta + \frac{C_{\varphi, \epsilon, P}}{R}.
\end{equation}

Let us now apply Lemma \ref{LemmeDeltarho} for the periodic case. For each $E$ (under $P_{\epsilon}$), and for any $R > 0$, we consider the stationary measure $ \fint_{-R/2}^{R/2} \delta_{\theta \cdot \bar{E}_R} dt$ whose energy is finite $P_{\epsilon}$-a.s., and we denote by $\rho_{2, E,R}$ the two-point correlation function of its push-forward by the map (\ref{pushforward}). From Proposition \ref{NonPer} we get
$$\left| \int_{\mr^2} (\rho_{2, E,_R}  -\rho_{2, P_{\mz}}) \varphi\right| \leq  C_{\varphi} (C+W(\bar{E}_R))^{\hal} \left(W(\bar{E}_R) - W(\mz)\right)^{1/2} $$
and integrating this inequality against $dP_{\epsilon}(E)$ gives (using Jensen's inequality in the last line)
\begin{multline} \label{comparrhoa}
\left|\int_{\mr^2} (\rho_{2, P_{\Lambda,\epsilon}}^{R} -\rho_{2, P_{\mz}}) \varphi \right| = \left| \int dP_{\epsilon}(E) \int_{\mr^2} (\rho_{2, E,R}  -\rho_{2, P_{\mz}}) \varphi\right| \\ \leq  C_{\varphi} (C+\bw(P^R_{\epsilon}))^{\hal} \left(\bw(P^R_{\epsilon}) - W(\mz)\right)^{1/2}. \end{multline}
By construction we know that for $R$ large enough (depending on $G_{\epsilon}$ and $\alpha$) we have $P_{\epsilon}$-a.s.
$$W(\bar{E}_R) \leq W(E) + \eta $$
hence (\ref{comparrhoa}) gives, for $R$ large enough
\begin{equation} \label{comparrhob}
\left|\int_{\mr^2} (\rho_{2, P_{\Lambda, \epsilon}}^{R} -\rho_{2, P_{\mz}}) \varphi \right| \leq C_{\varphi} (C+\bw(P_{\epsilon})+\eta)^{\hal} \left(\bw(P_{\epsilon}) + \eta - W(\mz)\right)^{1/2}.
\end{equation}
Combinining (\ref{comparrhoz}) and (\ref{comparrhob}), we get
\begin{equation}  \label{comparrhoc}
\left|\int (\rho_{2, P_{\Lambda,\epsilon}} - \rho_{2, P_{\mz}}) \varphi\right| \leq \eta + \frac{C_{\varphi, \epsilon, P}}{R} + C_{\varphi} (C+\bw(P_{\epsilon})+\eta)^{\hal} \left(\bw(P_{\epsilon}) + \eta - W(\mz)\right)^{1/2}.
\end{equation}
Since $\int |W(E)| dP(E)$ is finite (because $\bw(P)$ is finite and $W$ is bounded below on $\mathcal{A}_1$), and since $P(G_{\epsilon}^c) < \epsilon$, by the uniform continuity of the integral we know that \begin{equation} \label{comparbarW}
\bw(P_{\epsilon})  = \bw(P) + o_{\epsilon \to 0}(1).
\end{equation}
Combining (\ref{comparrhoc}), (\ref{comparbarW}) and (\ref{rhoepsilon}), sending $\alpha$ to 0, $\epsilon$ to 0, and then $R$ to $+ \infty$, we  conclude the proof.
\end{proof}
\subsection{Uniqueness results}
\label{uniqueness}
We now turn to the proof of the uniqueness results for minimizers as stated in Corollary \ref{corollaire1} and Corollary \ref{corollaire2}. First we observe that the invariance condition in the definition of admissible measures is equivalent to translation-invariance of the disintegration measures (for a definition see \cite[Section 5.3.]{GradientFlows}):
\begin{remark} \label{TlambdathenT} Let $P$ be an admissible probability measure on $\Sigma \times \mathcal{A}$, and let $\{P^x\}_{x \in \Sigma}$ be the disintegration measures of $P$ on $\mathcal{A}$ with respect to $\Sigma$. Since the first marginal of $P$ is the normalized Lebesgue measure on $\Sigma$ we have, by defintion of disintegration measures, for any continuous map $f \in L^1(dP)$:
$$\Esp_{P} \left[ f \right] = \fint_{\Sigma} dx \int f(x,E) dP^x(E).$$
For any smooth cut-off function $\chi$ on $\mr$ and any $\lambda \in \mr$ we have, by the invariance property of $P$:
$$\fint_{\Sigma} \chi(x) dx \int f(x,E) dP^x(E) = \fint_{\Sigma} \chi(x) dx \int f(x,E(\lambda + \cdot)) dP^x(E).$$
A standard approximation argument (by taking a sequence $\{\chi_n\}$ converging to a Dirac mass at $x_0$) shows that for any $x_0 \in \Sigma$, any $\lambda \in \mr$ we have $$\int f(x_0,E) dP^{x_0}(E) = \int f(x_0,E(\lambda + \cdot)) dP^{x_0}(E),$$ hence $P^{x_0}$ is translation-invariant for all $x_0 \in \Sigma$.

Conversely it is easy to see that if $\{P^{x}\}_{x \in \Sigma}$ is a measurable family of translation-invariant probability measures such that each $P^{x}$ is concentrated on $\mathcal{A}_{m_0(x)}$, then $\frac{dx_{|\Sigma}}{|\Sigma|} \otimes P^{x}$ is an admissible probability measure. In particular, $P_0$ as defined in \eqref{P0} is admissible.
\end{remark}

We now give the proof of Corollary \ref{corollaire1} and Corollary \ref{corollaire2}.
\begin{proof}
It is clear, from the crystallization result of section \ref{thmini}, that $P_{\frac{1}{m}\mz}$ (resp. $P_0$) is indeed \textit{a} minimizer of $\bw$ (resp. of $\tW$) on stationary measures $\probas(\mathcal{A}_m)$ (resp. on admissible probability measures). It remains to show the uniqueness.
By the scaling relation (\ref{scaling}), it is enough to show that $P_{\mz}$ is the unique minimizer of $\bw$ on $\probas(\mathcal{A}_1)$ to prove the first claim. If $P \in \probas(\mathcal{A}_1)$ is another minimizer we have $\bw(P) = W(\mz)$ hence by Theorem (\ref{NonPer}), if $P_{\Lambda}$ denotes the push-forward of $P$ by the map (\ref{pushforward}), we have
$$
\rho_{2, P_{\Lambda}} = \ro_{2, P_{\mz}}
$$
where $\rho_{2, P_{\Lambda}}$ is the two-point correlation function of $P_{\Lambda}$. Let us note that, in general, two point processes sharing the same two-point correlation function may be distinct (for conditions under which two point processes sharing \textit{all} their $k$-point correlation functions are equal, see \cite{lenard}), but here the rigidity of the lattice structure ensures that $P_{\Lambda} = P_{\mz}$. 

Testing $\rho_{2, P_{\Lambda}} = \rho_{2, P_{\mz}}$ against smooth approximations of $$(x,y) \mapsto \mathbf{1}_{x-y \in \mz^c}\mathbf{1}_{[-T,T]}(x)\mathbf{1}_{[-T,T]}(y) \text{ and } (x,y) \mapsto \mathbf{1}_{x-y \in \mz}\mathbf{1}_{[-T,T]}(x)\mathbf{1}_{[-T,T]}(y)$$ for any $T$, we get that $P_{\Lambda}$-almost surely, the configuration $\mathcal{C}$ is a translated copy of $\mz$. Indeed, testing against the first function shows that for any $T$ there is $P_{\Lambda}$-a.s. no couple of points $x,y \in \mathcal{C} \cap [-T,T]$ such that $x-y \notin \mz$, hence $\mathcal{C}$ is $P_{\Lambda}$-a.s. a subset of (a translated copy of) $\mz$. Moreover, testing against the second function shows that the average number of points in $[-T,T]$ coincides with that of $P_{\mz}$ for all $T$, since $P_{\Lambda}$ is stationary this ensures that in fact $P_{\Lambda} = P_{\mz}$. This proves the first claim of uniqueness. 

To prove the second claim, let $P \in \mathcal{P}(X)$ be a minimizer of $\tW$ on the set of admissible probability measures, and let us write its disintegration 
$P=  \frac{dx_{|\E}}{|\E|} \otimes P^x $ where $x$-a.e. in $\Sigma$, $P^x$ is a probability measure on $\mathcal{A}_{m_0(x)}$, and since $P$ is admissible we also know that $P^x$ itself is translation-invariant, see remark \ref{TlambdathenT}. Since $P$ minimizes $\tW$, the stationary probability measure $P^x$ minimizes $\bw$ over $\mathcal{P}(\mathcal{A}_{m_0(x)})$ for almost every $x \in \Sigma$. By the first claim, this means that $P^x = P_{\frac{1}{m_0(x)} \mz}$ for almost every $x \in \Sigma$, which in turn ensures that $$P = \frac{dx_{|\E}}{|\E|} \otimes  P_{\frac{t1}{m_0(x)} \mathbb{Z}} = P_0.$$
\end{proof}
\paragraph{Acknowledgements} The author would like to thank his PhD supervisor Sylvia Serfaty for suggesting the problem as well as for fruitful discussions.

\bibliographystyle{alpha}
\bibliography{appendixbib}
\end{document}